\documentclass[a4paper,11pt]{article}
\usepackage[left=3.17cm,right=3.17cm,top=2.54cm,
headheight=0.5cm,headsep=0.54cm,bottom=2.54cm,footskip=0.79cm
]{geometry}

\usepackage{graphicx,hyperref}

\usepackage{amsmath,amssymb,amsthm,cite}

\newtheorem{corollary}{Corollary}
\newtheorem{definition}{Definition}
\newtheorem{lemma}{Lemma}
\newtheorem{proposition}{Proposition}
\newtheorem{theorem}{Theorem}

\begin{document}

\title{Connes spectral distance on twisted fuzzy torus}
	
\author{Zhi-Kang You$^1$, Bing-Sheng Lin$^{1,2}$\thanks{Corresponding author. 
	E-mail address: sclbs@scut.edu.cn}\\
	\small $^{1}$ School of Mathematics, South China University of Technology,
	Guangzhou 510641, China\\
	\small $^{2}$ Institute for Mathematics, Astrophysics and Particle Physics,\\
	\small  Radboud University Nijmegen, Heyendaalseweg 135, 6525 AJ Nijmegen, The Netherlands.
}

\date{\today}

\maketitle
	
\begin{abstract}
In this paper, we study the Connes spectral distance between states on the fuzzy torus. We construct a Dirac operator by commutators and anticommutators. Based on this Dirac operator, we construct a spectral triple of the fuzzy torus.
We study some properties of the spectral distance on the fuzzy torus. We find that there is a reciprocal Pythagorean theorem between the spectral distances.
We construct a conditional expectation function of the optimal element which can lead to a contraction of the corresponding Lipschitz seminorm. We find that for any diagonal states, the corresponding optimal elements of spectral distances are also diagonal.
We explicitly calculate the spectral distances of some simple states, including basic states and some simple mixed states. We find that there are some kinds of cyclic symmetry in both the optimal elements and the spectral distances between the diagonal states.
Furthermore, we also construct a fuzzy torus with some type of conformal twist, and study the relation between conformal parameters and spectral distances.
\end{abstract}

\textbf{Keywords:} Fuzzy torus; Connes spectral distance; Twisted real structure.

\section{Introduction}
One of the central ideas of noncommutative geometry is to replace the classical description of a space by points and local coordinates with algebraic and operator-theoretic data encoded in a spectral triple $(\mathcal{A},\mathcal{H},D)$ \cite{Connes}. Here, $\mathcal{A}$ is an involutive algebra, $\mathcal{H}$ is a Hilbert space carrying a representation of $\mathcal{A}$, and $D$ is the Dirac operator, which determines the differential and metric structures of the space. Due to the noncommutativity, there is no traditional point in a spectral triple.
In a spectral triple, a pure state is the
analog of a traditional point in a normal commutative space.
There is no normal concept of distance between two points in a noncommutative space, but
one can calculate some kinds of distance measure between the states, such as the
Connes distance \cite{Connes1}.
Connes spectral distance
defines a metric on the state space purely in terms of the algebra and the Dirac operator. In the commutative case, it recovers the geodesic distance. In the noncommutative setting, it has become a useful tool for investigating the geometry of quantum state spaces.
The Connes spectral distances in some kinds
of noncommutative spaces have already been studied in literature
\cite{Bimonte,Cagnache,Wallet,Martinetti,DAndrea,Pythagoras,Franco,Chaoba,Revisiting,Kumar,Chakraborty,Lin1,Lin2,Clare,Kumar1}.

The introduction of a real structure $J$ further enriches spectral triples by naturally incorporating gauge symmetries\cite{Connes2,cacic,Chakraborty1}. For a real spectral triple $(\mathcal{A},\mathcal{H},D,J)$, the operator $J$ induces a right action compatible with the left representation, allowing unitary elements of the algebra to lift to gauge transformations on the Hilbert space. The inner fluctuation of the Dirac operator
produces gauge potentials analogous to classical connections.
Despite these developments, the reality condition and the first-order condition in standard real spectral triples are often too restrictive to accommodate natural geometric deformations. Motivated by these issues, twisted real structures have recently attracted considerable attention\cite{Filaci,Brzezinski1,Magee,Dabrowski,Martinetti2,Dabrowski1,Magee1}.

Fuzzy spaces are important types of noncommutative geometries\cite{Madore,Barrett1,Kimura}. Among them, the fuzzy torus is one of the most representative examples\cite{Bigatti,Bigatti1,Latremoliere,Perez}. It can be regarded as a finite-dimensional approximation of both the classical two-dimensional torus and the quantum torus.
In this paper, we will study spectral distances of states on the fuzzy torus with and without twisted real structure.

This paper is organized as follows. In Sec.~\ref{sec2}, we review the construction of the spectral triple of the fuzzy torus, and we propose a new construction of the Dirac operator that includes both commutators and anticommutators, as well as two parameters. 
In Sec.~\ref{sec3}, we study Connes spectral distances on the fuzzy torus. we derive some elementary properties of Connes spectral distances and optimal elements. We find that the spectral distances of some fuzzy tori satisfy a reciprocal Pythagorean theorem.
We introduce a condition expectation function which can lead to a contraction of the corresponding Lipschitz seminorm.
We find that the optimal elements for the diagonal states are still diagonal.
In Sec.~\ref{sec4}, we explicitly calculate the Connes spectral distances between some simple states.
We find that there are some kinds of cyclic symmetry in the spectral distances between diagonal states.
Furthermore, in Sec.~\ref{sec5}, we also study the spectral distances of states on the fuzzy torus with twisted real structure. We find that the spectral distance in the real spectral
triple with conformal twist depend on the conformal parameters.
Some discussions and conclusions are given in Sec.~\ref{sec6}.

\section{Fuzzy Torus}\label{sec2}

First, let us review the spectral triple of the fuzzy torus given in Ref.~\cite{Bigatti1}.

\begin{definition}
  \cite{Bigatti1} A noncommutative torus is a pair of unitary operators $U, V$ on a
  Hilbert space $\mathfrak{h}$ together with a complex number $q$, satisfying
  the relation
  \begin{equation}
    UV = qVU.
  \end{equation}
\end{definition}

When $\mathfrak{h}$ is finite-dimensional, the corresponding noncommutative
torus is called finite. Using the unitarity condition, we have $|q| = 1$.

\begin{definition}
  \cite{Bigatti1} Let $(U, V, \mathfrak{h})$ be a finite noncommutative torus. The
  torus algebra $\langle U, V \rangle$ is the $\ast$-subalgebra of
  $\mathrm{End} (\mathfrak{h})$ generated by $U, V$.
\end{definition}

\begin{definition}
  \cite{Bigatti1} A real structure on a noncommutative torus $(U, V, \mathfrak{h})$
  is an antiunitary map $\mathfrak{j}: \mathfrak{h} \rightarrow \mathfrak{h}$,
  which is also an involution, $\mathfrak{j}^2 = 1$, and satisfies
  \begin{equation}
    [\mathfrak{j}U\mathfrak{j}, U] = 0, \quad [\mathfrak{j}U\mathfrak{j}, V] =
    0, \quad [\mathfrak{j}V\mathfrak{j}, U] = 0, \quad
    [\mathfrak{j}V\mathfrak{j}, V] = 0.
  \end{equation}
\end{definition}

The real structure defined above also determines a right action of $U, V$ on
$\psi \in \mathfrak{h}$:
\begin{equation}
  \psi U := \mathfrak{j}U^{\ast} \mathfrak{j} \psi, \qquad \psi V :=
  \mathfrak{j}V^{\ast} \mathfrak{j} \psi .
\end{equation}
These right actions are also unitary,
\begin{equation}
  \psi U^{\ast} = \psi U^{- 1}, \qquad \psi V^{\ast} = \psi V^{- 1} .
\end{equation}
\begin{definition}
  \cite{Bigatti1} A noncommutative torus with a real structure $(U, V,
  \mathfrak{h}, \mathfrak{j})$ is called a fuzzy torus.
\end{definition}

\begin{definition}
  \cite{Bigatti1} A finite real spectral triple is defined as:
  \begin{itemize}
    \item A finite-dimensional Hilbert space $\mathcal{H}$.
    
    \item A $\ast$-algebra $A$ of operators with a faithful left action in
    $\mathcal{H}$.
    
    \item An integer $d$ modulo $8$, called the KO-dimension.
    
    \item For even KO-dimension $d$, there exists an Hermitian operator $\Gamma
    : \mathcal{H} \rightarrow \mathcal{H}$ called the chirality operator,
    which commutes with the algebra $A$, $a \Gamma = \Gamma a$, and satisfies
    $\Gamma^2 = 1$.
    
    \item An antiunitary map $J : \mathcal{H} \rightarrow \mathcal{H}$ called
    the real structure, satisfying
    \begin{equation}
      [a, JbJ^{- 1}] = 0, \quad \forall a, b \in A,
    \end{equation}
    and $J^2 = \varepsilon$. For even KO-dimension $d$, we also have $J \Gamma
    = \varepsilon'' \Gamma J$.
    
    \item An Hermitian operator $D : \mathcal{H} \rightarrow \mathcal{H}$
    called the Dirac operator, satisfying
    \begin{equation}
      [[D, a], JbJ^{- 1}] = 0, \quad \forall a, b \in A,
    \end{equation}
    and $DJ = \varepsilon' JD$. For even KO-dimension $d$, we also have $D
    \Gamma + \Gamma D = 0$.
  \end{itemize}
\end{definition}

The above symbols $\varepsilon, \varepsilon', \varepsilon''$ are related to
the KO-dimension $d$ as follows:
\[ \begin{array}{|c|c|c|c|c|c|c|c|c|}
     \hline
     d & 0 & 1 & 2 & 3 & 4 & 5 & 6 & 7\\
     \hline
     \varepsilon & 1 & 1 & - 1 & - 1 & - 1 & - 1 & 1 & 1\\
     \hline
     \varepsilon' & 1 & - 1 & 1 & 1 & 1 & - 1 & 1 & 1\\
     \hline
     \varepsilon'' & 1 &  & - 1 &  & 1 &  & - 1 & \\
     \hline
   \end{array} \]
Furthermore, we can also define a twisted structure on the real spectral
triple $(A, H, D, J)$.

\begin{definition}
  \cite{Magee,Dabrowski1} Let $\nu \in \mathrm{Aut} (H)$ be a self-adjoint
  invertible operator on the Hilbert space $H$, and assume that $A$ is
  invariant under the automorphism $\nu$, i.e., for any $a \in A$, $\nu^{- 1}
  a \nu \in A$. We say that the real spectral triple $(A, H, D, J)$ has a
  $\nu$-twisted real structure, if for all $a, b \in A$ the following
  conditions hold:
\begin{eqnarray}
    [D, a] J \nu^{- 2} b \nu^2 J^{- 1} &=& JbJ^{- 1}  [D, a],
\nonumber\\
    DJ \nu &=& \varepsilon' \nu JD,
\nonumber\\
    \nu J \nu &=& J,
\end{eqnarray}
  If $(A, H, D, J)$ is even-dimensional, with chirality operator $\Gamma : H
  \to H$, it additionally needs to satisfy:
  \begin{equation}
    \nu^2 \Gamma = \Gamma \nu^2 .
  \end{equation}
\end{definition}

In the following content, we will construct a spectral triple based on the fuzzy torus
$(U, V, \mathfrak{h}, \mathfrak{j})$ \cite{Bigatti1}. Consider the following
extended Hilbert space
\begin{equation}
  H =\mathbb{C}^4 \otimes \mathfrak{h},
\end{equation}
where $\mathbb{C}^4$ is the spinor space. The action of the algebra $A =
\langle U, V \rangle$ on $H$ is given by:
\begin{equation}
  U (v \otimes m) = v \otimes U (m), \quad V (v \otimes m) = v \otimes V (m),
  \quad v \in \mathbb{C}^4, \quad m \in \mathfrak{h}.
\end{equation}
Its real structure is defined as
\begin{equation}
  J (v \otimes m) = j (v) \otimes \mathfrak{j} (m),
\end{equation}
where $j$ is a real structure on the spinor space $\mathbb{C}^4$, satisfying
$j^2 = -\mathbb{I}_4$, and compatible with the Clifford generators: $j
\gamma^{\mu} = \gamma^{\mu} j$ ($\mu = 1, 2, 3, 4$), $\gamma^{\mu}$ are
anti-Hermitian Gamma matrices which satisfy the following relations,
\begin{equation}
  \{\gamma^{\mu}, \gamma^{\nu} \} = - 2 \delta_{\mu \nu} \mathbb{I}_4, \quad
  (\gamma^{\mu})^{\dag} = - \gamma^{\mu}.
\end{equation}
Furthermore, there is the Hermitian matrix $\gamma_5$,
\begin{equation}
  \gamma_5 = \gamma^1 \gamma^2 \gamma^3 \gamma^4, \quad \gamma_5^2
  =\mathbb{I}_4, \quad \{\gamma_5, \gamma^{\mu} \} = 0.
\end{equation}
The Gamma matrices can be explicitly expressed as:
\begin{eqnarray}
  \gamma^1 = \left( \begin{array}{cccc}
    0 & 0 & \mathrm{i} & 0\\
    0 & 0 & 0 & \mathrm{i}\\
    \mathrm{i} & 0 & 0 & 0\\
    0 & \mathrm{i} & 0 & 0
  \end{array} \right), && \gamma^2 = \left( \begin{array}{cccc}
    0 & 0 & - 1 & 0\\
    0 & 0 & 0 & 1\\
    1 & 0 & 0 & 0\\
    0 & - 1 & 0 & 0
  \end{array} \right),
\nonumber\\
  \gamma^3 = \left( \begin{array}{cccc}
    0 & 0 & 0 & \mathrm{i}\\
    0 & 0 & - \mathrm{i} & 0\\
    0 & - \mathrm{i} & 0 & 0\\
    \mathrm{i} & 0 & 0 & 0
  \end{array} \right), && \gamma^4 = \left( \begin{array}{cccc}
    0 & 0 & 0 & - 1\\
    0 & 0 & - 1 & 0\\
    0 & 1 & 0 & 0\\
    1 & 0 & 0 & 0
  \end{array} \right),
\end{eqnarray} 
and
\begin{equation}
  \gamma_5 = \left( \begin{array}{cccc}
    - 1 & 0 & 0 & 0\\
    0 & - 1 & 0 & 0\\
    0 & 0 & 1 & 0\\
    0 & 0 & 0 & 1
  \end{array} \right) .
\end{equation}
The real structure $j$ can be explicitly taken as the following anti-linear
operator:
\begin{equation}
  j = \left( \begin{array}{cccc}
    0 & 1 & 0 & 0\\
    - 1 & 0 & 0 & 0\\
    0 & 0 & 0 & - 1\\
    0 & 0 & 1 & 0
  \end{array} \right) \mathcal{C}, \qquad j \left( \begin{array}{c}
    v_1\\
    v_2\\
    v_3\\
    v_4
  \end{array} \right) = \left( \begin{array}{c}
    \overline{v_2}\\
    - \overline{v_1}\\
    - \overline{v_4}\\
    \overline{v_3}
  \end{array} \right),
\end{equation}
where $\mathcal{C}$ is the complex conjugation operator. The right action of
$A$ on $H$ induced by $J$ is
\begin{equation}
  (v \otimes m) a := Ja^{\ast} J^{- 1}  (v \otimes m) = v \otimes (ma) .
\end{equation}
The chirality grading operator $\Gamma$ is defined as
\begin{equation}
  \Gamma (v \otimes m) = (\gamma_5 v) \otimes m,
\end{equation}
where $\gamma_5$ is the chirality operator in the four-dimensional Clifford
algebra. Since $\Gamma$ acts only on the spinor part and the algebra $A$ acts
only on $\mathfrak{h}$, we have $[\Gamma, a] = 0$ for all $a \in A$.

In the following, using the construction in Ref. \cite{Bigatti1}, we will
specifically consider the following noncommutative torus $(C, S,  \mathbb{M}_N
(\mathbb{C}))$, where the $N \times N$ matrices $C$, $S$ are defined as
\begin{equation}\label{cs}
  C = \left( \begin{array}{ccccc}
    1 & 0 & 0 & \cdots & 0\\
    0 & q & 0 & \cdots & 0\\
    0 & 0 & q^2 & \cdots & 0\\
    \vdots & \vdots & \vdots & \ddots & \vdots\\
    0 & 0 & 0 & \cdots & q^{N - 1}
  \end{array} \right), \qquad S = \left( \begin{array}{ccccc}
    0 & 0 & \cdots & 0 & 1\\
    1 & 0 & \cdots & 0 & 0\\
    0 & 1 & \cdots & 0 & 0\\
    \vdots & \vdots & \ddots & \vdots & \vdots\\
    0 & 0 & \cdots & 1 & 0
  \end{array} \right),
\end{equation}
where $q = e^{2 \pi \mathrm{i} / N}$. In this case, the algebra $A = \langle C, S
\rangle \cong  \mathbb{M}_N (\mathbb{C})$, and
\begin{eqnarray}
&&  C^N = S^N =\mathbb{I}_N, \qquad C^{\dag} C = S^{\dag} S
  =\mathbb{I}, \qquad q^N = 1.
\nonumber\\
&&	CS = qSC, \quad C^{\dag}S = q^{-1}SC^{\dag}, \quad CS^{\dag} = q^{-1}S^{\dag}C, \quad C^{\dag}S^{\dag} = qS^{\dag}C^{\dag}.
\end{eqnarray}
Obviously, as $N \to \infty$, $q \to 1$, the noncommutative relation of $C, S$
gradually tends to a commutative one, so the noncommutative torus $(C, S,  \mathbb{M}_N
(\mathbb{C}))$ approaches the classical torus.

The inner product on the Hilbert space $\mathfrak{h}=  \mathbb{M}_N (\mathbb{C})$ is the
Hilbert-Schmidt inner product $(\psi, \phi) = \mathrm{Tr} (\psi^{\dag} \phi)$,
and the action of $C, S$ on $\mathfrak{h}=  \mathbb{M}_N (\mathbb{C})$ is realized by
ordinary matrix multiplication. The real structure on the noncommutative torus
$(C, S,  \mathbb{M}_N (\mathbb{C}))$ is simply the Hermitian conjugation of matrices on
$ \mathbb{M}_N (\mathbb{C})$, $\mathfrak{j}= (\cdot)^{\dag}$.

For simplicity, let us introduce the following generalized coordinate operators $X_{\mu} \in
A$ ($\mu = 1, 2, 3, 4$), satisfying the relations:
\begin{eqnarray}
  X_1 = \frac{1}{2}  (C^{\dag} + C), && X_2 = \frac{\mathrm{i}}{2} (C^{\dag} - C),
\nonumber\\
  X_3 = \frac{1}{2}  (S^{\dag} + S), && X_4 = \frac{\mathrm{i}}{2} (S^{\dag} - S) .
\end{eqnarray}
Hence there are
\begin{equation}
  X_{\mu}^{\dag} = X_{\mu}, \quad C = X_1 + \mathrm{i}X_2, \quad
  S = X_3 + \mathrm{i}X_4 .
\end{equation}
Similar to the construction in Ref.~\cite{Bigatti1}, the Dirac operator can be
defined as
\begin{eqnarray}
  D & = & \mathrm{i} \alpha \sum_{\mu = 1}^4 \gamma^{\mu} \otimes [X_{\mu},
    \cdot  ] + \beta \sum_{\mu = 1}^4 \gamma_5
  \gamma^{\mu} \otimes \{X_{\mu},   \cdot   \}
  \nonumber\\
  & = & \mathrm{i} \alpha \sum_{\mu = 1}^4 \gamma^{\mu} \otimes d_{\mu} + \beta
  \sum_{\mu = 1}^4 \gamma_5 \gamma^{\mu} \otimes \delta_{\mu} \nonumber\\
  & = & \alpha D_1 + \beta D_2 . 
\end{eqnarray}
where $\alpha, \beta \in \mathbb{R}$ are real parameters, and the operators $d_{\mu},\delta_{\mu}: A \rightarrow A$,
\begin{equation}
  d_{\mu} (a) := [X_{\mu}, a], \quad \delta_{\mu} (a) := \{X_{\mu}, a\},
  \qquad \mu = 1, 2, 3, 4, \quad a \in A.
\end{equation}
When $\alpha=1$, $\beta=0$, there is
\begin{equation}
  D=D_1 = \mathrm{i} \sum_{\mu = 1}^4 \gamma^{\mu} \otimes [X_{\mu},  
  \cdot  ] = \mathrm{i} \sum_{\mu = 1}^4 \gamma^{\mu} \otimes d_{\mu},
\end{equation}
and when $\alpha=0$, $\beta=1$, there is
\begin{equation}
  D=D_2 = \sum_{\mu = 1}^4 \gamma_5 \gamma^{\mu} \otimes \{X_{\mu},
    \cdot   \} = \sum_{\mu = 1}^4 \gamma_5
  \gamma^{\mu} \otimes \delta_{\mu} .
\end{equation}
It is easy to verify that the operators $d_{\mu}$, $\delta_{\mu}$ are all
Hermitian.
\begin{eqnarray}
  (a, d_{\mu} (b)) & = & (a, [X_{\mu}, b]) \nonumber\\
  & = & \mathrm{Tr} (a^{\dag} [X_{\mu}, b]) \nonumber\\
  & = & \mathrm{Tr} (a^{\dag} X_{\mu} b - a^{\dag} bX_{\mu}) \nonumber\\
  & = & \mathrm{Tr} (a^{\dag} X_{\mu}^{\dag} b - X_{\mu}^{\dag} a^{\dag} b)
  \nonumber\\
  & = & \mathrm{Tr} ([X_{\mu}, a]^{\dag} b) \nonumber\\
  & = & ([X_{\mu}, a], b) \nonumber\\
  & = & (d_{\mu} (a), b), 
\end{eqnarray}
Therefore the operator $d_{\mu}$ is Hermitian. Similarly, one can prove that
$\delta_{\mu}$ is also Hermitian.

Since the Gamma matrices $\gamma^{\mu}$ are anti-Hermitian and $\gamma_5$ is
Hermitian, we have $(\gamma_5 \gamma^{\mu})^{\dag} = (\gamma^{\mu})^{\dag}
(\gamma_5)^{\dag} = - \gamma^{\mu} \gamma_5 = \gamma_5 \gamma^{\mu}$. It is
easy to see that the above Dirac operators $D, D_1, D_2$ are all Hermitian.

\begin{proposition}
  $(A, H, D, J, \Gamma)$ is a finite real even spectral triple of
  KO-dimension $4$.
\end{proposition}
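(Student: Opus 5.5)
The plan is to check the axioms of a finite real spectral triple one at a time, using the sign table for $d=4$: $\varepsilon=-1$, $\varepsilon'=1$, $\varepsilon''=1$. Finiteness of $H=\mathbb{C}^4\otimes\mathbb{M}_N(\mathbb{C})$ is clear. The left action of $A=\mathbb{M}_N(\mathbb{C})$ is left matrix multiplication on the second factor, which is faithful. Hermiticity of $D$ has already been shown above. $\Gamma$ is Hermitian, $\Gamma^2=\gamma_5^2\otimes 1=1$, and $[\Gamma,a]=0$ since the two act on different factors.

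For the real structure, $J^2=j^2\otimes\mathfrak{j}^2=(-\mathbb{I}_4)\otimes 1=-1=\varepsilon$, since $(\psi^\dagger)^\dagger=\psi$. For $J\Gamma=\Gamma J$: the matrix part of $j$ is real and only permutes the pairs $\{1,2\}$ and $\{3,4\}$, while $\gamma_5=\mathrm{diag}(-1,-1,1,1)$ is real and constant on those pairs. Hence $j\gamma_5=\gamma_5 j$, which gives $\varepsilon''=1$.

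For the order zero condition, I will first compute $JbJ^{-1}(v\otimes m)=v\otimes m b^{\dagger}$. This is right multiplication, which commutes with left multiplication by $a$. For the first order condition, I write $d_\mu=L_{X_\mu}-R_{X_\mu}$ and $\delta_\mu=L_{X_\mu}+R_{X_\mu}$, where $L$ and $R$ denote left and right multiplication. Since $R_{X_\mu}$ commutes with $L_a$,
\begin{equation*}
[D,a]=\mathrm{i}\alpha\sum_\mu\gamma^\mu\otimes L_{[X_\mu,a]}+\beta\sum_\mu\gamma_5\gamma^\mu\otimes L_{[X_\mu,a]}.
\end{equation*}
This acts only by left multiplications on $\mathfrak{h}$, so it commutes with $JbJ^{-1}=1\otimes R_{b^\dagger}$. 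The anticommutation $D\Gamma+\Gamma D=0$ follows from $\gamma_5\gamma^\mu=-\gamma^\mu\gamma_5$, applied to both $\gamma^\mu$ and $\gamma_5\gamma^\mu$.

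The step needing the most care is $DJ=\varepsilon' JD=JD$, because of the antilinearity of $J$. For $D_1$, I use $j(\mathrm{i}\gamma^\mu v)=-\mathrm{i}\,\gamma^\mu j v$, which relies on antilinearity and the given relation $j\gamma^\mu=\gamma^\mu j$. I will double-check that relation directly from the explicit matrices, remembering that $j$ conjugates the complex entries of $\gamma^1$ and $\gamma^3$. On the algebra side, $\mathfrak{j}\,d_\mu(m)=[X_\mu,m]^\dagger=-[X_\mu,m^\dagger]=-d_\mu\mathfrak{j}(m)$ because $X_\mu$ is Hermitian. The two minus signs cancel, so $JD_1=D_1J$. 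For $D_2$, the same relations give $j\gamma_5\gamma^\mu=\gamma_5\gamma^\mu j$ with no factor $\mathrm{i}$, and $\mathfrak{j}\,\delta_\mu(m)=\{X_\mu,m\}^\dagger=\delta_\mu\mathfrak{j}(m)$, so $JD_2=D_2J$. Since $\alpha,\beta$ are real, $JD=DJ$, which completes the list of axioms for KO-dimension $4$.
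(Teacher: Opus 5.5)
Your proposal is correct and follows essentially the same axiom-by-axiom verification as the paper. This includes the key cancellation in $JD_1=D_1J$, where the sign from antilinearity of $j$ on $\mathrm{i}\gamma^\mu$ cancels the sign in $[X_\mu,m]^\dagger=-[X_\mu,m^\dagger]$, and the order-one condition via $[D,a]$ acting on $\mathfrak{h}$ only by left multiplication by $[X_\mu,a]$; your planned matrix check of $j\gamma^\mu=\gamma^\mu j$ does go through for all four $\gamma^\mu$. The only axiom you omit is antiunitarity of $J$, which the paper checks directly and which is immediate here, since $j$ is a real orthogonal matrix composed with complex conjugation and $\mathrm{Tr}((m^\dagger)^\dagger n^\dagger)=\mathrm{Tr}(n^\dagger m)$.
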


\begin{proof}
  For $\psi = v \otimes m, \phi = w \otimes n$, there is
  \begin{equation}
  	(J \psi, J \phi) = (j (v), j (w))_{\mathbb{C}^4} \mathrm{Tr} ((m^{\dag})^{\dag} n^{\dag}) 
  	= (w, v) \mathrm{Tr} (mn^{\dag}) = (\phi, \psi),
  \end{equation}
  therefore $J$ is an antiunitary operator. From $J (v \otimes m) = j (v)
  \otimes \mathfrak{j} (m) = j (v) \otimes m^{\dag}$, there is
    \begin{equation}
    J^2  (v \otimes m) = J (j (v) \otimes m^{\dag}) = j^2 (v) \otimes m = - v
    \otimes m,
  \end{equation}
  hence $J^2 = - 1$.
  
  For the grading operator $\Gamma$, since $\Gamma (v \otimes m) =
  \gamma_5 v \otimes m$ and $\gamma_5^2 = \mathbb{I}$, we have $\Gamma^2 = 1$. $\Gamma$
  acts only on the spinor part and commutes with the algebra action: $a \Gamma
  = \Gamma a$. Because $\gamma_5$ anticommutes with all $\gamma^\mu$, and $D$ is
  built from $\gamma^\mu$ and $\gamma_5$, one immediately obtain $\Gamma D + D
  \Gamma = 0$. The relation $j \gamma_5 = \gamma_5 j$ lets to $J \Gamma = \Gamma J$.

  Next we verify the reality condition $JD = DJ$.
  \begin{eqnarray}
&&    (d_\mu m)^{\dag} = [X_\mu, m]^{\dag} = - [X_\mu, m^{\dag}] = - d_\mu (m^{\dag}),
\nonumber\\
&&    (\delta_\mu m)^{\dag} = \{X_\mu, m\}^{\dag} = \{X_\mu, m^{\dag} \} = \delta_\mu
    (m^{\dag}),
  \end{eqnarray}
  hence
  \begin{eqnarray}
    J (\mathrm{i} \gamma^\mu \otimes d_\mu)  (v \otimes m) & = & J (\mathrm{i} \gamma^\mu v \otimes
    d_\mu m) \nonumber\\
    & = & j (\mathrm{i} \gamma^\mu v) \otimes (d_\mu m)^{\dag} \nonumber\\
    & = & \mathrm{i} \gamma^\mu j (v) \otimes d_\mu (m^{\dag}) \nonumber\\
    & = & (\mathrm{i} \gamma^\mu \otimes d_\mu)  (j (v) \otimes m^{\dag}) \nonumber\\
    & = & (\mathrm{i} \gamma^\mu \otimes d_\mu) J (v \otimes m). 
  \end{eqnarray}
 Similarly, there is
  \begin{equation}
    J (\gamma_5 \gamma^\mu \otimes \delta_\mu)  (v \otimes m) = (\gamma_5 \gamma^\mu \otimes \delta_\mu) J (v \otimes m) 
  \end{equation}
 It is easy to see that, there is $JD = DJ$.
  
  The verification of the order-zero condition is as follows. By definition,
  $Jb^{\dag} J^{- 1}  (v \otimes m) = v \otimes mb = \mathrm{id} \otimes R_b$,
  where $R_b$ is right multiplication by $b$. For any $a, b \in A$, there is
  \begin{equation}
    a(Jb^{\dag} J^{- 1}) (v \otimes m) = a (v \otimes mb)= v \otimes (a mb)=(Jb^{\dag} J^{- 1}) a (v \otimes m)
  \end{equation}
So $Jb^{\dag} J^{- 1}$ commutes with $a$, and $[a, Jb^{\dag} J^{- 1}] = 0$.
  
Finally, let us verify the order-one condition.
For the Dirac operator $D_1$, we have
\begin{equation}
	[D_1,a] = \mathrm{i} \sum_{\mu = 1}^4 \gamma^{\mu} \otimes [d_{\mu},a].
\end{equation}
Since there is
\begin{equation}
    (\gamma^{\mu} \otimes [d_{\mu},a])(Jb^{\dag} J^{- 1}) (v \otimes m) 
    = \gamma^{\mu} v \otimes ([d_{\mu},a]mb)
    =(Jb^{\dag} J^{- 1})(\gamma^{\mu} \otimes [d_{\mu},a]) (v \otimes m),
  \end{equation}
so the operator $\gamma^{\mu} \otimes [d_{\mu},a]$ commutes with $Jb^{\dag} J^{- 1}$. It is easy to see that $[D_1,a]$ commutes with $Jb^{\dag} J^{- 1}$. Similarly, the operator $[D_2,a]$ also commutes with $Jb^{\dag} J^{- 1}$. Therefore, we have $[[D, a], Jb^{\dag} J^{- 1}] = 0$ for all $a, b \in A$.
  
From the above results, one can find that $(\mathcal{A}, \mathcal{H}, D, J, \Gamma)$ is a finite real even spectral triple with KO-dimension $4$.
\end{proof}

\begin{corollary}
  $(A, H, D_1, J, \Gamma)$ and $(A, H, D_2, J, \Gamma)$ are finite real even
  spectral triples of KO-dimension $4$.
\end{corollary}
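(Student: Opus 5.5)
The corollary is the special case $(\alpha,\beta)=(1,0)$ and $(\alpha,\beta)=(0,1)$ of the preceding Proposition. Since $\alpha,\beta\in\mathbb{R}$ were arbitrary there, the plan is simply to rerun its proof and check that nothing in it needed $\alpha$ and $\beta$ both nonzero. I will go through the axioms one at a time for $D=\alpha D_1+\beta D_2$ and observe that each condition involving $D$ holds separately for $D_1$ and for $D_2$, and is linear in $D$.

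First, the conditions that do not involve $D$ carry over verbatim, since they depend only on $A$, $H$, $J$, $\Gamma$:
\begin{itemize}
\item $J$ is antiunitary and $J^2=-1$;
\item $\Gamma^2=1$, $[\Gamma,a]=0$ and $J\Gamma=\Gamma J$;
\item the order-zero condition $[a,Jb^{\dag}J^{-1}]=0$.
\end{itemize}
These give $\varepsilon=-1$ and $\varepsilon''=1$, matching KO-dimension $4$ in the table.

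Next, the conditions involving $D$.
\begin{itemize}
\item \textbf{Hermiticity.} It was shown termwise: $d_\mu,\delta_\mu$ are Hermitian, $\mathrm{i}\gamma^\mu$ is Hermitian because $\gamma^\mu$ is anti-Hermitian, and $\gamma_5\gamma^\mu$ is Hermitian. Hence $D_1$ and $D_2$ are each Hermitian.
\item \textbf{Chirality.} $\Gamma D_i+D_i\Gamma=0$ for $i=1,2$ separately. Every summand of $D_1$ contains one $\gamma^\mu$, and every summand of $D_2$ contains $\gamma_5\gamma^\mu$, which also anticommutes with $\gamma_5$.
\item \textbf{Reality.} The computation in the proof was done separately for the summands $\mathrm{i}\gamma^\mu\otimes d_\mu$ and $\gamma_5\gamma^\mu\otimes\delta_\mu$. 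It used $j\gamma^\mu=\gamma^\mu j$, $j\gamma_5=\gamma_5 j$, and the antilinearity of $j$ together with $(d_\mu m)^\dag=-d_\mu(m^\dag)$, so that the two sign changes cancel. It therefore yields $JD_1=D_1J$ and $JD_2=D_2J$ individually, which gives $\varepsilon'=1$.
\item \textbf{Order one.} This too was established separately for $[D_1,a]$ and $[D_2,a]$. Both operators act as (spinor matrix)$\otimes$(left multiplications on $\mathfrak{h}$), and these commute with $Jb^\dag J^{-1}=\mathrm{id}\otimes R_b$.
\end{itemize}

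The only point needing care is that the reality argument for the $\delta_\mu$ term is actually valid, since the Proposition only says ``similarly''. I would check it explicitly. The relation $j\gamma_5\gamma^\mu=\gamma_5\gamma^\mu j$ follows from $j$ commuting with both $\gamma_5$ and $\gamma^\mu$. There is no factor $\mathrm{i}$ in this term, and $(\delta_\mu m)^\dag=\delta_\mu(m^\dag)$ carries no sign, so no conjugation sign arises. Once this is checked, both triples satisfy all axioms with $(\varepsilon,\varepsilon',\varepsilon'')=(-1,1,1)$, i.e.\ KO-dimension $4$.
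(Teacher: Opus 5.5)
Your proposal is correct and follows the paper's approach. The paper gives no separate proof of the corollary; it is the specialization $(\alpha,\beta)=(1,0)$ and $(0,1)$ of the Proposition, whose proof already checks hermiticity, chirality, reality and the order-one condition summand by summand for $\mathrm{i}\gamma^\mu\otimes d_\mu$ and $\gamma_5\gamma^\mu\otimes\delta_\mu$. Your explicit check of the $\delta_\mu$ reality computation, and your observation that $[D_i,a]$ acts as a spinor matrix tensored with left multiplication by $[X_\mu,a]$ (so it commutes with $Jb^{\dag}J^{-1}=\mathrm{id}\otimes R_b$), fill in steps the paper leaves as ``similarly''.
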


A state $\psi$ on the algebra $A$ is a linear functional $\psi : A \to
\mathbb{C}$, satisfying positivity: $\psi (a^{\ast} a) \ge 0$ for all $a \in
A$, and of norm $1$. For a normal bounded quantum state $\psi$, it can also be
represented by a density operator $\rho \in A$, where $\rho$ is positive
semidefinite, self-adjoint, and of trace $1$. The state $\psi$ acting on an
element $a \in A$ can be defined as
\begin{equation}
  \psi (a) = \mathrm{tr} (\rho a) .
\end{equation}
Consider the states $\psi_1, \psi_2$ correspond to density matrices
$\rho_1, \rho_2$, respectively.

\begin{definition}
  {\cite{Connes}} In a spectral triple $(A, H, D)$, the Connes spectral
  distance between the states $\psi_1$ and $\psi_2$ is defined as
  \begin{equation}
    d (\psi_1, \psi_2) \equiv d (\rho_1, \rho_2) = \sup_{e \in B} |
    \mathrm{tr} (\rho_1 e) - \mathrm{tr} (\rho_2 e) |,
  \end{equation}
  where the set
  \begin{equation}
    B := \{e \in A : \|[D, e]\|_{op} \leqslant 1\},
  \end{equation}
and $\|a\|_{op}$ denotes the operator norm of $a$:
  \begin{equation}
    \|a\|_{op} \equiv \sup_{\xi \in H,   \| \xi \| = 1} \|a \xi
    \|, \qquad \|a\|^2 \equiv \mathrm{tr} (a^{\dag} a) .
  \end{equation}
\end{definition}

The map $L (a) = \|[D, a]\|_{op}$ is a Lipschitz seminorm. The inequality $L
(e) = \|[D, e]\|_{op} $ $\leqslant 1$ is the so-called ball
condition.

Obviously, there is $d (\psi_1, \psi_2) = d (\psi_2, \psi_1)$. If $e \in B$,
then $- e \in B$, hence we also have
\begin{equation}
  d (\rho_1, \rho_2) = \sup_{e \in B} [\mathrm{tr} (\rho_1 e) - \mathrm{tr}
  (\rho_2 e)] = \sup_{e \in B} \left[ \mathrm{tr} (\Delta \rho  
  e) \right] = \mathrm{tr} (\Delta \rho   e_o),
\end{equation}
where $e_o$ is the so-called optimal element. Since the
supremum in the Connes spectral distance can always be attained by an
Hermitian element \cite{Iochum}, one only need to consider an Hermitian optimal
element $e_o$.

The 1-forms are defined as the following $A$-bimodule:
\begin{equation}
  \Omega_D^1 (A) := \left\{ \left. \sum_i a_i [D, b_i] \right| a_i, b_i \in A
  \right\} .
\end{equation}
One can define the inner fluctuations of the Dirac operator
as \cite{Magee}
\begin{equation}
  D_{\omega} = D + \omega + \varepsilon' J \omega J^{- 1},
\end{equation}
where $\omega^{\ast} = \omega \in \Omega_D^1 (A)$ is a self-adjoint 1-form. If
the spectral triple also possesses a $\nu$-twisted real structure, then its
fluctuated Dirac operator is defined as
\begin{equation}
  D_{\omega} = D + \omega + \varepsilon' \nu J \omega J^{- 1} \nu = D + \omega
  + \varepsilon' \omega', \qquad \omega' = \nu J \omega J^{- 1} \nu .
\end{equation}
As shown in Ref.~\cite{Magee}, the real spectral triple $(A, H, D_{\omega}, J)$ is also
a $\nu$-twisted real spectral triple.

One can also define a conformal scaling transformation\cite{Dabrowski1}.
Take $k \in A$ to be positive and invertible, with bounded inverse
$k^{- 1}$. Denote $k_J := JkJ^{- 1}$,
\begin{equation}
  D_{k_J} = k_J Dk_J, \qquad \nu (h) = (k^{- 1} k_J) (h),
\end{equation}
and given a real
spectral triple $(A, H, D, J)$, then $(A, H, J, D_{k_J}, \nu)$ forms a $\nu$-twisted real spectral triple. If
$(A, H, D, J)$ is an even spectral triple with chirality operator $\Gamma$,
then $(A, H, D_{k_J}, J, \nu)$ is also an even spectral triple with chirality
operator $\Gamma$, and has the same KO-dimension as $(A, H, D, J)$.

\section{Connes Spectral Distance on the Fuzzy Torus}\label{sec3}
Now, let us study the Connes spectral distance of the states on a fuzzy torus. First, let us study some properties of the spectral distances and optimal elements.
\begin{lemma}
  For the operators $d_{\mu} (\cdot) \equiv [X_{\mu}, \cdot], \delta_{\mu}
  (\cdot) \equiv \{X_{\mu}, \cdot\} : A \rightarrow A$, $\mu = 1, 2, 3, 4$, we
  have
  \begin{equation}
    [d_{\mu}, a] = [\delta_{\mu}, a] = [X_{\mu}, a], \quad a \in A.
  \end{equation}
\end{lemma}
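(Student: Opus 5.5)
The plan is to read every object in the identity as an operator on $\mathfrak{h}=\mathbb{M}_N(\mathbb{C})$. The element $a\in A$ acts by left multiplication $L_a(m)=am$, as specified for the representation of $A$ on $H$. I would also introduce right multiplication $R_b(m)=mb$. In this notation the two operators split as
\begin{equation*}
d_\mu = L_{X_\mu}-R_{X_\mu},\qquad \delta_\mu = L_{X_\mu}+R_{X_\mu}.
\end{equation*}
The statement $[d_\mu,a]=[\delta_\mu,a]=[X_\mu,a]$ then means that both commutators of operators equal left multiplication by the matrix $[X_\mu,a]$.

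The key steps, in order, are as follows.
\begin{itemize}
\item First, I use associativity of matrix multiplication: $L_a R_b(m)=amb=R_b L_a(m)$, so $[R_b,L_a]=0$ for all $a,b$. This is the same fact already used in the order-zero condition, where $Jb^\dagger J^{-1}=\mathrm{id}\otimes R_b$.
\item Second, left multiplication is an algebra homomorphism, so $[L_{X_\mu},L_a]=L_{X_\mu a-aX_\mu}=L_{[X_\mu,a]}$.
\item Combining the two, $[d_\mu,L_a]=[L_{X_\mu},L_a]-[R_{X_\mu},L_a]=L_{[X_\mu,a]}$, and likewise $[\delta_\mu,L_a]=L_{[X_\mu,a]}+0$. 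The sign in front of $R_{X_\mu}$ does not matter because that term drops out.
\end{itemize}

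As a sanity check I would also verify this directly on an arbitrary $m$. For $d_\mu$,
\begin{equation*}
[d_\mu,a](m)=X_\mu am-amX_\mu-a(X_\mu m-mX_\mu)=(X_\mu a-aX_\mu)m,
\end{equation*}
and the anticommutator case is the same computation with the two $mX_\mu$ terms carrying $+$ signs, which still cancel.

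There is no real obstacle here. The only point needing care is notational: the identity compares an operator commutator on $\mathfrak{h}$ with a matrix commutator in $A$, so one must state that $[X_\mu,a]$ on the right denotes its left-multiplication action. That is consistent with how the paper lets $A$ act on $H=\mathbb{C}^4\otimes\mathfrak{h}$.
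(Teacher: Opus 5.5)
Your proposal is correct and is essentially the paper's proof. The paper evaluates $[d_\mu,a](m)=[X_\mu,am]-a[X_\mu,m]=[X_\mu,a]m$ on an arbitrary $m\in A$ and notes that the anticommutator case is similar, which is exactly your ``sanity check'' computation; your decomposition $d_\mu=L_{X_\mu}-R_{X_\mu}$, $\delta_\mu=L_{X_\mu}+R_{X_\mu}$ only repackages it, making explicit why the right-multiplication term, and hence the sign distinguishing $d_\mu$ from $\delta_\mu$, drops out.
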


\begin{proof}
  Consider the action of $[d_{\mu}, a]$ on an arbitrary $m \in A$,
  \begin{eqnarray}
    {}[d_{\mu}, a] (m) & = & \left[ [X_{\mu},   \cdot
     ], a \right] (m) \nonumber\\
    & = & \left( [X_{\mu},   \cdot  ] a - a
    [X_{\mu},   \cdot  ]) m \right. \nonumber\\
    & = & [X_{\mu},   am  ] - a [X_{\mu},
      m  ] \nonumber\\
    & = & [X_{\mu}, a] (m), 
  \end{eqnarray}
  hence $[d_{\mu}, a] = [X_{\mu}, a]$. Similarly, there is $[\delta_{\mu}, a]
  = [X_{\mu}, a]$.
  
  \ 
\end{proof}

\begin{lemma}
  For the Dirac operator
  \begin{equation}
    D =  \mathrm{i} \alpha \sum_{\mu = 1}^4 \gamma^{\mu}
    \otimes d_{\mu} + \beta \sum_{\mu = 1}^4 \gamma_5 \gamma^{\mu} \otimes
    \delta_{\mu},
  \end{equation}
  there is
  \begin{equation}
    [D, a] = \sum_{\mu = 1}^4 (\mathrm{i} \alpha \gamma^{\mu} + \beta \gamma_5
    \gamma^{\mu}) \otimes [X_{\mu}, a], \quad a \in A.
  \end{equation}
\end{lemma}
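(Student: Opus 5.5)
The plan is to compute the commutator termwise, using bilinearity of the commutator together with the action of $a$ on $H=\mathbb{C}^4\otimes\mathfrak{h}$. Since $a\in A$ acts as $a(v\otimes m)=v\otimes(am)$, the operator $a$ on $H$ is $\mathbb{I}_4\otimes L_a$, where $L_a$ denotes left multiplication on $\mathfrak{h}=\mathbb{M}_N(\mathbb{C})$. For any operators $M$ on $\mathbb{C}^4$ and $T$ on $\mathfrak{h}$, the identity
\begin{equation}
[M\otimes T,\ \mathbb{I}_4\otimes L_a] = M\otimes [T, L_a]
\end{equation}
holds. It follows immediately from $(M\otimes T)(\mathbb{I}_4\otimes L_a)=M\otimes TL_a$ and $(\mathbb{I}_4\otimes L_a)(M\otimes T)=M\otimes L_aT$.

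Applying this identity to each summand of $D$ and using linearity gives
\begin{equation}
[D,a]=\mathrm{i}\alpha\sum_{\mu=1}^4\gamma^\mu\otimes[d_\mu,a]+\beta\sum_{\mu=1}^4\gamma_5\gamma^\mu\otimes[\delta_\mu,a].
\end{equation}
Here $[d_\mu,a]$ and $[\delta_\mu,a]$ are understood as commutators of operators on $\mathfrak{h}$, with $a$ acting by left multiplication. This is exactly the setting of the preceding lemma. That lemma gives $[d_\mu,a]=[\delta_\mu,a]=[X_\mu,a]$, the latter again acting by left multiplication. For the $\delta_\mu$ case one checks $\{X_\mu,am\}-a\{X_\mu,m\}=X_\mu am-aX_\mu m=[X_\mu,a]m$, because the right-multiplication terms $amX_\mu$ cancel. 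Substituting and collecting the two sums over $\mu$ yields
\begin{equation}
[D,a]=\sum_{\mu=1}^4(\mathrm{i}\alpha\gamma^\mu+\beta\gamma_5\gamma^\mu)\otimes[X_\mu,a].
\end{equation}

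There is no real obstacle. The only point needing care is bookkeeping: one must keep the identification of $a$ and $[X_\mu,a]$ as left-multiplication operators consistent, so that the $\otimes$ factorization is legitimate. One must also note that the right-multiplication parts of $d_\mu$ and $\delta_\mu$ commute with $L_a$, which is why the commutator and anticommutator contribute identically.
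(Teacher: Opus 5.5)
Your proposal is correct and matches the paper's proof essentially step for step: write $a=\mathbb{I}_4\otimes L_a$, pull it through the tensor factors to get $\gamma^\mu\otimes[d_\mu,a]$ and $\gamma_5\gamma^\mu\otimes[\delta_\mu,a]$, then apply the preceding lemma $[d_\mu,a]=[\delta_\mu,a]=[X_\mu,a]$. Your explicit check of the $\delta_\mu$ case, where the right-multiplication terms $amX_\mu$ cancel, supplies the step the paper dismisses with ``similarly.''
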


\begin{proof}
  For any element $v \otimes m \in H$, both $d_{\mu}$ and $a \in A$ act only
  on $m$. Using the above lemma, we have
\begin{eqnarray}
	{}[D, a] & = & \left[ \mathrm{i} \alpha \sum_{\mu = 1}^4 \gamma^{\mu} \otimes d_{\mu}
	+ \beta \sum_{\mu = 1}^4 \gamma_5 \gamma^{\mu} \otimes \delta_{\mu}, a
	\right] \nonumber\\
	& = & \mathrm{i} \alpha \sum_{\mu = 1}^4 \gamma^{\mu} \otimes [d_{\mu}, a] + \beta
	\sum_{\mu = 1}^4 \gamma_5 \gamma^{\mu} \otimes [\delta_{\mu}, a] \nonumber\\
	& = & \mathrm{i} \alpha \sum_{\mu = 1}^4 \gamma^{\mu} \otimes [X_{\mu}, a] + \beta
	\sum_{\mu = 1}^4 \gamma_5 \gamma^{\mu} \otimes [X_{\mu}, a] \nonumber\\
	& = & \sum_{\mu = 1}^4 (\mathrm{i} \alpha \gamma^{\mu} + \beta \gamma_5
	\gamma^{\mu}) \otimes [X_{\mu}, a] . 
\end{eqnarray}
\end{proof}

\begin{lemma}
  For any $a \in A$, we have
  \begin{equation}\label{da}
    \|[D, a]\|_{op} = \lambda   \|D_a \|_{op},
  \end{equation}
  where
  \begin{equation}
    D_a = \sum_{\mu = 1}^4 \gamma^{\mu} \otimes [X_{\mu}, a], \quad \lambda =
    \sqrt{\alpha^2 + \beta^2} .
  \end{equation}
\end{lemma}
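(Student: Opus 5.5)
By the previous lemma,
\begin{equation*}
[D,a]=\sum_{\mu}(\mathrm{i}\alpha\gamma^\mu+\beta\gamma_5\gamma^\mu)\otimes[X_\mu,a]=(\mathrm{i}\alpha\mathbb{I}_4+\beta\gamma_5)\otimes\mathbb{I}\cdot D_a .
\end{equation*}
The plan is therefore to factor out the spinor matrix $M:=\mathrm{i}\alpha\mathbb{I}_4+\beta\gamma_5$, acting as $M\otimes\mathrm{id}$ on $H=\mathbb{C}^4\otimes\mathfrak{h}$, and to show that it is $\lambda$ times a unitary. Since the operator norm is invariant under multiplication by a unitary, this gives $\|[D,a]\|_{op}=\lambda\|D_a\|_{op}$.

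First I check that $M$ is normal and compute $M^\dagger M$. Because $\gamma_5$ is Hermitian with $\gamma_5^2=\mathbb{I}_4$, and $\alpha,\beta$ are real, we get
\begin{equation*}
M^\dagger M=(-\mathrm{i}\alpha+\beta\gamma_5)(\mathrm{i}\alpha+\beta\gamma_5)=\alpha^2\mathbb{I}_4+\beta^2\gamma_5^2+\mathrm{i}\alpha\beta\gamma_5-\mathrm{i}\alpha\beta\gamma_5=(\alpha^2+\beta^2)\mathbb{I}_4 .
\end{equation*}
The same computation gives $MM^\dagger=\lambda^2\mathbb{I}_4$. If $\lambda=0$, then $D=0$ and both sides vanish. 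Otherwise $W:=M/\lambda$ is unitary, and so is $W\otimes\mathrm{id}$ on $H$. Concretely, in the basis where $\gamma_5=\mathrm{diag}(-1,-1,1,1)$, $M$ is diagonal with entries $\mathrm{i}\alpha\mp\beta$, each of modulus $\lambda$.

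Next, for any $\xi\in H$,
\begin{equation*}
\|[D,a]\xi\|=\|(M\otimes\mathrm{id})D_a\xi\|=\lambda\|(W\otimes\mathrm{id})D_a\xi\|=\lambda\|D_a\xi\| .
\end{equation*}
Taking the supremum over unit vectors $\xi$ yields \eqref{da}.

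There is no real obstacle here. The only point needing care is the factorization $\gamma_5\gamma^\mu\otimes[X_\mu,a]=(\gamma_5\otimes\mathrm{id})(\gamma^\mu\otimes[X_\mu,a])$, where $\gamma_5$ multiplies on the left. This holds because $\gamma_5\otimes\mathrm{id}$ acts only on the spinor factor and $[X_\mu,a]$ acts only on $\mathfrak{h}$, so the two tensor factors compose independently.
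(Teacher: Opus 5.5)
Your proof is correct and follows the paper's argument exactly: factor $[D,a]=(F\otimes\mathbb{I})D_a$ with $F=\mathrm{i}\alpha\mathbb{I}_4+\beta\gamma_5$, check $F^\dagger F=\lambda^2\mathbb{I}_4$ so that $F/\lambda$ is unitary, and then use unitary invariance of the operator norm. You also treat the degenerate case $\lambda=0$ separately, where $F/\lambda$ is undefined; the paper leaves this case implicit.
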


\begin{proof}
  Let $F = \mathrm{i} \alpha \mathbb{I}_4 + \beta \gamma_5$, then $F^{\dag} F =
  (\alpha^2 + \beta^2) \mathbb{I}_4$, and $\dfrac{F}{\sqrt{\alpha^2 + \beta^2}}
  = \dfrac{F}{\lambda}$ is a unitary matrix.
  \begin{eqnarray}
    {}[D, a] & = & \sum_{\mu = 1}^4 (\mathrm{i} \alpha \gamma^{\mu} + \beta \gamma_5
    \gamma^{\mu}) \otimes [X_{\mu}, a] \nonumber\\
    & = & (F \otimes \mathbb{I}) \left( \sum_{\mu = 1}^4 \gamma^{\mu} \otimes
    [X_{\mu}, a] \right) \nonumber\\
    & = & (F \otimes \mathbb{I}) D_a . 
  \end{eqnarray}
  Hence
  \begin{equation}
    \|[D, a]\|_{op} = \left\| \left( \dfrac{F}{\lambda} \otimes \mathbb{I} \right)
    \cdot \lambda D_a \right\|_{op} = \| \lambda D_a \|_{op} = \lambda \|D_a
    \|_{op} .
  \end{equation}
  
\end{proof}

The result (\ref{da}) means that the Dirac operators $D$ and $D_1, D_2$ have similar Lipschitz seminorm and ball conditions.

\begin{theorem}
  The spectral triples $(A, H, D)$ and $(A, H, \lambda D_1)$, $(A, H, \lambda
  D_2)$ have the same metric.
\end{theorem}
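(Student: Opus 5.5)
The plan is to show that the three spectral triples have the same Lipschitz seminorm on $A$. Once that holds, the ball $B=\{e\in A:\|[D,e]\|_{op}\le 1\}$ is the same set for $D$, $\lambda D_1$ and $\lambda D_2$. The supremum defining $d(\rho_1,\rho_2)$ then runs over the same set of $e$ with the same integrand $\mathrm{tr}(\Delta\rho\, e)$, so the distances coincide for every pair of states.

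The key computation uses Lemma~3, i.e.\ equation (\ref{da}), for three choices of parameters. For general $(\alpha,\beta)$ it gives $\|[D,a]\|_{op}=\lambda\|D_a\|_{op}$ with $\lambda=\sqrt{\alpha^2+\beta^2}$. Specializing to $\alpha=1,\beta=0$ gives $\|[D_1,a]\|_{op}=\|D_a\|_{op}$, and specializing to $\alpha=0,\beta=1$ gives $\|[D_2,a]\|_{op}=\|D_a\|_{op}$. Note that $D_a=\sum_\mu\gamma^\mu\otimes[X_\mu,a]$ does not depend on $\alpha,\beta$.

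Since $\lambda$ is a real scalar, $[\lambda D_i,a]=\lambda[D_i,a]$, so for every $a\in A$ and $i=1,2$
\begin{equation*}
\|[\lambda D_i,a]\|_{op}=\lambda\|[D_i,a]\|_{op}=\lambda\|D_a\|_{op}=\|[D,a]\|_{op}.
\end{equation*}
Hence the three Lipschitz seminorms agree, the balls agree, and the spectral distances agree.

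The only point requiring care is the degenerate case $\lambda=0$, i.e.\ $\alpha=\beta=0$. Then $D=0$ and $\lambda D_i=0$, so all three triples are trivially identical and give the same (infinite) distance between distinct states. I would note this separately, or assume $\lambda>0$ as implicit in the statement. I would also remark that Hermiticity of $\lambda D_i$ is clear, so these are genuine spectral triples by Corollary~1. There is no real obstacle: the main step is just the observation that Lemma~3 isolates all dependence on $(\alpha,\beta)$ into the scalar factor $\lambda$.
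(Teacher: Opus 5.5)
Your proposal is correct and follows the paper's proof essentially verbatim. Both specialize Lemma~3 (equation (\ref{da})) to $(\alpha,\beta)=(1,0)$ and $(0,1)$ to get $\|[D_1,a]\|_{op}=\|[D_2,a]\|_{op}=\|D_a\|_{op}$, then conclude $\|[\lambda D_i,a]\|_{op}=\lambda\|D_a\|_{op}=\|[D,a]\|_{op}$, so the balls, and hence the distances, coincide. Your separate treatment of the degenerate case $\lambda=0$ is a small extra that the paper omits.
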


\begin{proof}
  Since $D = \alpha D_1 + \beta D_2$ with any real number $\alpha, \beta$.
  When $\alpha = 1, \beta = 0$, we have
  \begin{equation}
    \|[D_1, a]\|_{op} = \|D_a \|_{op}.
  \end{equation}
  When $\alpha = 0, \beta = 1$, we have
  \begin{equation}
    \|[D_2, a]\|_{op} = \|D_a \|_{op} .
  \end{equation}
  So there is
  \begin{eqnarray}
    \|[D, a]\|_{op} & = & \lambda \|D_a \|_{op} = \lambda \|[D_1, a]\|_{op} =
    \lambda \|[D_2, a]\|_{op}\nonumber\\
    & = & \|[\lambda D_1, a]\|_{op}\nonumber\\
    & = & \|[\lambda D_2, a]\|_{op} .
  \end{eqnarray}
  Therefore, the spectral triples $(A, H, D)$ and $(A, H, \lambda D_1)$, $(A,
  H, \lambda D_2)$ have the same ball condition, and hence the same metric.
  
  \ 
\end{proof}

 Denote by $d_D  (\varphi, \psi)$ the spectral distance between any two
  states $\varphi, \psi$ in the spectral triple $(A, H, D)$, then there is\cite{wly}
  \begin{equation}\label{td}
    d_{tD}  (\varphi, \psi) = \frac{1}{|t|} d_D  (\varphi, \psi),
  \end{equation}
  where $t$ is any non-zero real number.

\begin{corollary}
  In the spectral triples $(A, H, D)$ and $(A, H, D_1)$, $(A, H, D_2)$, the
  spectral distance between any two states $\varphi, \psi \in A$ satisfies
  \begin{equation}\label{dd12}
    d_D  (\varphi, \psi) = \frac{1}{\lambda} d_{D_1}  (\varphi, \psi) =
    \frac{1}{\lambda} d_{D_2}  (\varphi, \psi) .
  \end{equation}
\end{corollary}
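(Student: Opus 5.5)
The plan is to combine the preceding Theorem with the scaling relation (\ref{td}); no new estimates are needed. By the previous Lemma, $\|[D,a]\|_{op}=\lambda\|D_a\|_{op}$ for every $a\in A$. Specializing to $(\alpha,\beta)=(1,0)$ and $(0,1)$ gives $\|[D_1,a]\|_{op}=\|[D_2,a]\|_{op}=\|D_a\|_{op}$. Hence, as in the Theorem,
\begin{equation*}
\|[D,a]\|_{op}=\|[\lambda D_1,a]\|_{op}=\|[\lambda D_2,a]\|_{op}\qquad\text{for all } a\in A .
\end{equation*}
So the three ball sets $B$ attached to $D$, $\lambda D_1$ and $\lambda D_2$ coincide. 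The spectral distance is a supremum of $|\mathrm{tr}(\rho_1 e)-\mathrm{tr}(\rho_2 e)|$ over $e\in B$, so this gives $d_D(\varphi,\psi)=d_{\lambda D_1}(\varphi,\psi)=d_{\lambda D_2}(\varphi,\psi)$.

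Next I apply (\ref{td}) with $t=\lambda$ to each of $D_1$ and $D_2$. This yields $d_{\lambda D_i}=\frac{1}{\lambda}d_{D_i}$ for $i=1,2$, which is exactly (\ref{dd12}). If one wants (\ref{td}) justified directly, note that $\|[tD,e]\|_{op}\le 1$ iff $\|[D,te]\|_{op}\le1$. Thus $e\mapsto te$ is a bijection between the two balls, and the linear functional $\mathrm{tr}(\Delta\rho\,\cdot)$ scales by $1/|t|$.

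The only point needing care is the requirement $t\neq0$, that is $\lambda>0$. If $\alpha=\beta=0$, then $D=0$ and every distance is infinite (or undefined), so the statement presupposes $(\alpha,\beta)\neq(0,0)$. I will state this assumption explicitly. Apart from that, the argument is a direct two-line corollary and I expect no real obstacle.
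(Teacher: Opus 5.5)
Your proposal is correct and matches the paper's route: the preceding Theorem shows that $D$, $\lambda D_1$ and $\lambda D_2$ have identical ball conditions, and the scaling relation $d_{tD}=\frac{1}{|t|}d_D$ with $t=\lambda$ then gives the factor $1/\lambda$. Your direct verification of the scaling relation via the bijection $e\mapsto te$, and your remark that $(\alpha,\beta)\neq(0,0)$ must be assumed so that $\lambda>0$, are both sound additions to what the paper leaves implicit.
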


\begin{corollary}
  In the spectral triples $(A, H, D)$ and $(A, H, D_1)$, $(A, H, D_2)$, the
  spectral distance between any two states $\varphi, \psi \in A$ satisfies
  \begin{equation}
    \left( \frac{1}{d_D (\varphi, \psi)} \right)^2 = \left(
    \frac{\alpha}{d_{D_1} (\varphi, \psi)} \right)^2 + \left(
    \frac{\beta}{d_{D_2} (\varphi, \psi)} \right)^2 .
  \end{equation}
\end{corollary}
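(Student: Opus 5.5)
The reciprocal Pythagorean relation follows directly from the previous corollary, relation (\ref{dd12}), together with the definition $\lambda=\sqrt{\alpha^2+\beta^2}$ from Lemma~3. No new analysis of the ball condition is needed. All the work was done in Lemma~3, which shows that $\|[D,a]\|_{op}=\lambda\|D_a\|_{op}$ with $D_a$ independent of $\alpha,\beta$. That is the only nontrivial input.

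First, I note that for $\varphi\neq\psi$ all three distances are strictly positive, so their reciprocals make sense. Since $A\cong\mathbb{M}_N(\mathbb{C})$ and the commutant of $\{X_\mu\}$ (equivalently of $C,S$) consists of scalars, $\|D_a\|_{op}=0$ forces $a\in\mathbb{C}\mathbb{I}$. For distinct states there is therefore a non-scalar Hermitian $a$ with $\mathrm{tr}(\Delta\rho\, a)\ne 0$. Rescaling $a$ into the ball gives a positive distance. The distances are also finite, because the ball modulo scalars is bounded in finite dimension. For $\varphi=\psi$ the identity is read as degenerate and is excluded.

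Next, I use (\ref{dd12}) in the form $d_{D_1}(\varphi,\psi)=d_{D_2}(\varphi,\psi)=\lambda\, d_D(\varphi,\psi)$. Substituting gives
\begin{equation*}
\left(\frac{\alpha}{d_{D_1}}\right)^2+\left(\frac{\beta}{d_{D_2}}\right)^2=\frac{\alpha^2+\beta^2}{\lambda^2 d_D^2}=\frac{1}{d_D^2},
\end{equation*}
which is the claim.

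I would add a remark covering the case where $\alpha$ or $\beta$ vanishes. There (\ref{td}) is used with $t=\lambda>0$, and the formula still holds trivially; at least one of $\alpha,\beta$ must be nonzero for $D$ to define a finite distance. The only step needing any care is the positivity and finiteness check in the first step, and it is routine given the irreducibility of $\langle C,S\rangle$.
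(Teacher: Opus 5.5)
Your proposal is correct and takes the same route as the paper. The paper states this corollary without a separate proof, as an immediate consequence of (\ref{dd12}), i.e.\ $d_{D_1}=d_{D_2}=\lambda d_D$, combined with $\lambda^2=\alpha^2+\beta^2$. Your added check that the distances are positive and finite for $\varphi\neq\psi$ supplies rigor the paper leaves implicit: by irreducibility of $\langle C,S\rangle$, the seminorm $\|D_a\|_{op}$ vanishes only on scalars.
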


Consider the Dirac operator
\begin{equation}
  \mathcal{D}= \mathrm{i} \sum_{\mu = 1}^4 \gamma^{\mu} \otimes d_{\mu} + \sum_{\mu =
  1}^4 \gamma_5 \gamma^{\mu} \otimes \delta_{\mu} = D_1 + D_2 .
\end{equation}
Then in the spectral triples $(A, H, \mathcal{D})$ and $(A, H, D_1)$, $(A, H,
D_2)$, the spectral distances satisfy the reciprocal Pythagorean theorem
\begin{equation}
  d_{\mathcal{D}}^{- 2}  (\varphi, \psi) = d_{D_1}^{- 2}  (\varphi, \psi) +
  d_{D_2}^{- 2}  (\varphi, \psi) .
\end{equation}
Thus, in some sense, $D_1$ built from commutators and $D_2$ built from
anticommutators can be regarded as two mutually orthogonal components of $\mathcal{D}$.

Since the spectral triples $(A, H, D)$ and $(A, H, D_1)$, $(A, H, D_2)$ have
similar metrics, in the following contents, we will mainly consider the
spectral distance in the spectral triple $(A, H, D_1)$.

\begin{lemma}
  In the spectral triple $(A, H, D_1)$, for any Hermitian element $a \in A$,
  there are
  \begin{eqnarray}
    &&\|[C, a]\|_{op} = \|[C^{\dag}, a]\|_{op} \leqslant \|[D_1, a]\|_{op},
\nonumber\\
    &&\|[S, a]\|_{op} = \|[S^{\dag}, a]\|_{op} \leqslant \|[D_1, a]\|_{op} .
  \end{eqnarray}
\end{lemma}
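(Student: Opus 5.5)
The plan has two parts: first show the two equalities, which hold for any Hermitian $a$, and then obtain the inequalities by compressing $D_a$ onto a single spinor matrix element.

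\emph{The equalities.} For Hermitian $a$ we have $[C,a]^{\dag}=[a,C^{\dag}]=-[C^{\dag},a]$. Since an operator and its adjoint have the same operator norm, this gives $\|[C,a]\|_{op}=\|[C^{\dag},a]\|_{op}$. The identity for $S$ follows in the same way.

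\emph{The inequalities.} By the lemma giving (\ref{da}) with $\alpha=1,\beta=0$, we have $\|[D_1,a]\|_{op}=\|D_a\|_{op}$, where $D_a=\sum_{\mu}\gamma^{\mu}\otimes[X_{\mu},a]$ acts on $H=\mathbb{C}^4\otimes\mathfrak{h}$. For unit vectors $u,w\in\mathbb{C}^4$, let $P_u:\mathfrak{h}\to H$ be the isometry $m\mapsto u\otimes m$. Its adjoint is $P_w^{\dag}(v\otimes m)=(w,v)\,m$. Then
\begin{equation}
P_w^{\dag}D_aP_u=\sum_{\mu=1}^4 (w^{\dag}\gamma^{\mu}u)\,[X_{\mu},a],
\end{equation}
and since $\|P_u\|=\|P_w^\dag\|=1$, its operator norm on $\mathfrak{h}$ is at most $\|D_a\|_{op}$. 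Moreover, the operator norm of left multiplication by an element of $A$ on $\mathfrak{h}$ (Hilbert--Schmidt) equals its matrix operator norm, so no subtlety arises there. It therefore suffices to choose $u,w$ so that this compression is a unimodular multiple of $[C,a]$, respectively of $[S,a]$ or $[S^{\dag},a]$.

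\emph{Choosing the spinor vectors.} This is the only real step, and it uses the explicit Gamma matrices. Take $u=e_3$. From the third columns, $\gamma^1e_3=\mathrm{i}e_1$, $\gamma^2e_3=-e_1$, $\gamma^3e_3=-\mathrm{i}e_2$ and $\gamma^4e_3=-e_2$.
\begin{itemize}
\item With $w=e_1$, the coefficients are $(\mathrm{i},-1,0,0)$. The compression is then $\mathrm{i}([X_1,a]+\mathrm{i}[X_2,a])=\mathrm{i}[C,a]$, which gives $\|[C,a]\|_{op}\le\|D_a\|_{op}$.
\item With $w=e_2$, the coefficients are $(0,0,-\mathrm{i},-1)$. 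The compression is then $-\mathrm{i}([X_3,a]-\mathrm{i}[X_4,a])=-\mathrm{i}[S^{\dag},a]$, which gives $\|[S^{\dag},a]\|_{op}\le\|D_a\|_{op}$. Combined with the equality $\|[S,a]\|_{op}=\|[S^{\dag},a]\|_{op}$, this yields the bound for $S$.
\end{itemize}
The main check is getting these matrix entries right. If a sign turns out differently, the fallback is to use $e_4$ or to swap the roles of $C$ and $C^{\dag}$, which the equalities already permit.
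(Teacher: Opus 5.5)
Your proof is correct and is essentially the paper's argument. The paper writes $D_a$ in spinor blocks, passes to $\|M^{\dag}M\|_{op}$, and bounds the diagonal block $[C,a]^{\dag}[C,a]+[S,a][S,a]^{\dag}$; this is the squared form of compressing the third spinor column of $D_a$, which you do directly with $u=e_3$, $w=e_1,e_2$ (your entries $\mathrm{i}[C,a]$ and $-\mathrm{i}[S^{\dag},a]$ match the paper's matrix for $D_a$). Your version is a slightly cleaner packaging of the same idea: it also states the adjoint identity $[C,a]^{\dag}=-[C^{\dag},a]$, which the paper uses only implicitly for the equalities, and notes that left multiplication on the Hilbert--Schmidt space carries the matrix operator norm.
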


\begin{proof}
  For any Hermitian element $a \in A$, there are
  \begin{eqnarray}\label{mm1}
    D_a & = & \sum_{\mu = 1}^4 \gamma^{\mu} \otimes [X_{\mu}, a] \nonumber\\
    & = & \mathrm{i} \left( \begin{array}{cccc}
      0 & 0 & [X_1 + \mathrm{i}X_2, a] & [X_3 + \mathrm{i}X_4, a]\\
      0 & 0 & [- X_3 + \mathrm{i}X_4, a] & [X_1 - \mathrm{i}X_2, a]\\
      {}[X_1 - \mathrm{i}X_2, a] & [- X_3 - \mathrm{i}X_4, a] & 0 & 0\\
      {}[X_3 - \mathrm{i}X_4, a] & [X_1 + \mathrm{i}X_2, a] & 0 & 0
    \end{array} \right) \nonumber\\
    & = & \mathrm{i} \left( \begin{array}{cc}
      0 & M\\
      - M^{\dag} & 0
    \end{array} \right), 
  \end{eqnarray}
  where
  \begin{eqnarray}\label{mm}
    M & = & \left( \begin{array}{cc}
      {}[X_1 + \mathrm{i}X_2, a] & [X_3 + \mathrm{i}X_4, a]\\
      {}[- X_3 + \mathrm{i}X_4, a] & [X_1 - \mathrm{i}X_2, a]
    \end{array} \right) \nonumber\\
    & = & \left( \begin{array}{cc}
      {}[C, a] & [S, a]\\
      {}[- S^{\dag}, a] & [C^{\dag}, a]
    \end{array} \right) \nonumber\\
    & = & \left( \begin{array}{cc}
      {}[C, a] & [S, a]\\
      {}[S, a]^{\dag} & - [C, a]^{\dag}
    \end{array} \right) . 
  \end{eqnarray}
  Since $\|M^{\dag} M\|_{op} = \|MM^{\dag} \|_{op}$, there is
  \begin{equation}\label{dam}
    \|[D_1, a]\|_{op}^2 =   \|\mathrm{i}D_a \|_{op}^2 =  
    \|D_a^{\dag} D_a \|_{op} = \|M^{\dag} M\|_{op} .
  \end{equation}
  From the above matrix representation (\ref{mm}), we have
  \begin{equation}
    M^{\dag} M = \left( \begin{array}{cc}
      {}[C, a]^{\dag} [C, a] + [S, a] [S, a]^{\dag} & [C, a]^{\dag} [S, a] -
      [S, a] [C, a]^{\dag}\\
      {}[S, a]^{\dag} [C, a] - [C, a] [S, a]^{\dag} & [S, a]^{\dag} [S, a] +
      [C, a] [C, a]^{\dag}
    \end{array} \right) .
  \end{equation}
  By Bessel's inequality, we obtain
  \begin{equation}
    \sup_{\phi \in \mathfrak{h}, \langle \phi | \phi \rangle = 1}
    \!\!\!\!\! \langle \phi | [C, a]^{\dag} [C, a] +
    [S, a] [S, a]^{\dag} | \phi \rangle \leqslant \|M^{\dag} M\|_{op} =
     \|[D_1, a]\|_{op}^2 .
  \end{equation}
  Since $\langle \phi | [C, a]^{\dag} [C, a] | \phi \rangle \geqslant 0$ and
  $\langle \phi | [S, a] [S, a]^{\dag} | \phi \rangle \geqslant 0$, one can
  obtain
  \begin{eqnarray}
   && \sup_{\phi \in \mathfrak{h}, \langle \phi | \phi \rangle = 1}
    \!\!\!\!\! \langle \phi | [C, a]^{\dag} [C, a] |
    \phi \rangle \leqslant \|[D_1, a]\|_{op}^2,
\nonumber\\
   && \sup_{\phi \in \mathfrak{h}, \langle \phi | \phi \rangle = 1}
    \!\!\!\!\! \langle \phi | [S, a] [S, a]^{\dag} |
    \phi \rangle \leqslant \|[D_1, a]\|_{op}^2,
  \end{eqnarray}
  i.e.
  \begin{equation}
    \|[C, a]\|_{op}^2 \leqslant \|[D_1, a]\|_{op}^2, \qquad \|[S,
    a]\|_{op}^2 \leqslant \|[D_1, a]\|_{op}^2,
  \end{equation}
  or
  \begin{equation}
    \|[C, a]\|_{op} \leqslant \|[D_1, a]\|_{op}, \qquad
    \|[S, a]\|_{op} \leqslant \|[D_1, a]\|_{op} .
  \end{equation}
\end{proof}

In the above proof, we have used the Bessel's inequality \cite{Revisiting}: If the matrix
$M$ has entries $ M_{ij}$, then the inequality holds
\begin{equation}
  | M_{ij} |^2 \leqslant \sum_i | M_{ij} |^2 \leqslant \|M\|_{op}^2 .
\end{equation}

In general, it is difficult to derive the explicit expressions of the spectral distances for any states.
In the following content, we will first consider the spectral distances between some special types of states.

Define basis state vectors: $|i \rangle = (0, \ldots, 1, \ldots, 0)^T$ $(1 \leqslant i
\leqslant N)$, here the $i$-th element is 1, and the others are all 0. Then we have
\begin{eqnarray}
  && S| 1 \rangle = |2 \rangle, \quad S| 2 \rangle = |3 \rangle, \quad \ldots,\quad S |N-1 \rangle = |N \rangle,
  \quad S|N \rangle = |1 \rangle,
\nonumber\\
  && S^{\dag} |N \rangle = |N - 1 \rangle, \quad \ldots, \quad S^{\dag} |2
  \rangle = |1 \rangle, \quad S^{\dag} |1 \rangle = |N \rangle .
\end{eqnarray}
Any state on the fuzzy torus $(A, H, D_1)$ can be represented by a density
matrix $\rho \in A$ of the form
\begin{equation}
  \rho = \sum_{i, j} p_{ij} |i \rangle \langle j|,\quad p_{ij}\in\mathbb{C}, \quad \rho^{\dag} = \rho,
  \quad \mathrm{tr} (\rho) = 1.
\end{equation}
First, let us consider the spectral distance between any two diagonal states
\begin{equation}
  \psi = \sum_{i = 1}^N p_i |i \rangle \langle i|, \qquad \varphi = \sum_{i =
  1}^N q_i |i \rangle \langle i| .
\end{equation}
The spectral distance between $\psi$ and $\varphi$ is
\begin{eqnarray}\label{pqe}
  d (\psi, \varphi) & = & \sup_{e \in B} | \mathrm{tr} (\psi e) - \mathrm{tr}
  (\varphi e) | = \sup_{e \in B} | \mathrm{tr} [(\psi - \varphi) e] |
  \nonumber\\
  & = & \sup_{e \in B} \left| \mathrm{tr} \left( \sum_{i = 1}^N  (p_i - q_i)
  |i \rangle \langle i|e  \right) \right| \nonumber\\
  & = & \sup_{e \in B} \left| \sum_{i = 1}^N (p_i - q_i) e_{ii} \right| . 
\end{eqnarray}
One can see that the above result depends only on the diagonal entries of the
optimal element $e_o$.

In general, it is relatively difficult to obtain the optimal element directly.
But if the states and the optimal element are all diagonal, the calculations
will become much simpler.

For any element $a \in A = \mathbb{M}_N (\mathbb{C})$, define the conditional
expectation
\begin{equation}\label{ea}
  E (a) = \frac{1}{N}  \sum_{j = 1}^N C^j a C^{- j},
\end{equation}
Using the relation $1 + q + \cdots + q^{N - 1} = 0$, after some
straightforward calculations, one can obtain
\begin{equation}
  E (a) = \mathrm{diag} (a_{11}, a_{22}, \ldots, a_{NN}) .
\end{equation}
Thus $E (a)$ is just the diagonal part of the matrix $a$. Therefore, for
diagonal states $\psi, \varphi$, the spectral distance can also be written as
\begin{equation}
  d (\psi, \varphi) = \sup_{e \in B} \mathrm{tr} [(\psi - \varphi) E (e)] .
\end{equation}
\begin{lemma}\label{dea}
  In the spectral triple $(A, H, D_1)$, for any Hermitian element $a \in A =
   \mathbb{M}_N (\mathbb{C})$, we have
  \begin{equation}
    \|[D_1, E (a)]\|_{op} \leqslant \|[D_1, a]\|_{op} .
  \end{equation}
\end{lemma}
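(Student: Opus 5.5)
The plan is to write $E(a)$ as a convex combination of unitary conjugates $a_j := C^j a C^{-j}$. We then show that each conjugate has exactly the same Lipschitz seminorm as $a$, and conclude by the triangle inequality. Since $a\mapsto [D_1,a]$ is linear, we have
\begin{equation*}
[D_1,E(a)]=\frac{1}{N}\sum_{j=1}^N [D_1,a_j], \qquad \text{hence}\qquad \|[D_1,E(a)]\|_{op}\leqslant \frac1N\sum_{j=1}^N\|[D_1,a_j]\|_{op}.
\end{equation*}
So it suffices to prove $\|[D_1,a_j]\|_{op}=\|[D_1,a]\|_{op}$ for each $j$. By induction it is enough to treat $j=1$. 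Note that $a_j$ is again Hermitian, because $C$ is unitary, so the block form (\ref{mm}) applies to it.

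For $a'=CaC^{-1}$ we use the commutation relations of the torus. From $CS=qSC$ we get $C^{-1}SC=q^{-1}S$ and $C^{-1}S^\dagger C=qS^\dagger$, while $C$ commutes with $C$ and $C^\dagger$. This gives
\begin{equation*}
[C,a']=C[C,a]C^{-1},\qquad [S,a']=C[C^{-1}SC,a]C^{-1}=q^{-1}C[S,a]C^{-1},
\end{equation*}
and similarly $[S,a']^\dagger = q\,C[S,a]^\dagger C^{-1}$ and $[C,a']^\dagger=C[C,a]^\dagger C^{-1}$. Inserting these into (\ref{mm}), the matrix $M'$ built from $a'$ should be
\begin{equation*}
M'=(P\otimes C)\,M\,(P\otimes C)^{\dagger},\qquad P=\mathrm{diag}(1,q).
\end{equation*}
The phase $q^{-1}$ appears in the $(1,2)$ block and $q$ in the $(2,1)$ block, which is exactly what conjugation by $P$ produces. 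Since $P\otimes C$ is unitary, we get $\|M'^\dagger M'\|_{op}=\|M^\dagger M\|_{op}$. By (\ref{dam}) this yields $\|[D_1,a']\|_{op}=\|[D_1,a]\|_{op}$.

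The main point to check carefully is the phase bookkeeping in this step. We must confirm that the scalar factors $q^{\mp1}$ acquired by the $S$-blocks are precisely those removable by a diagonal unitary in the $2\times 2$ spinor-block index. This matters because a phase pattern inconsistent across blocks would not be a unitary conjugation. An equivalent route that sidesteps $M$ is to verify directly that $D_{a'}$ is unitarily equivalent to $D_a$ via $V\otimes(\text{left multiplication by }C)$, where $V$ is a suitable diagonal phase matrix on $\mathbb{C}^4$. Once the $j=1$ case holds, iterating gives all $j$, and the averaging argument completes the proof.
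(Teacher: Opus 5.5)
Your proposal is correct and follows essentially the same route as the paper. Both write $E(a)$ as the average of the conjugates $C^jaC^{-j}$, show that each conjugate has the same Lipschitz seminorm because its block matrix $M$ is conjugated by the unitary $\mathrm{diag}(1,q^j)\otimes C^j$ (your $P\otimes C$ for $j=1$), and finish with the triangle inequality. The differences are cosmetic: you reduce to $j=1$ by induction and compare norms through $\|[D_1,a]\|_{op}^2=\|M^\dagger M\|_{op}$, rather than conjugating $[D_1,C^jaC^{-j}]$ directly by $\mathbb{I}_2\otimes U_j$ as the paper does. Your phase bookkeeping ($q^{-1}$ on the $(1,2)$ block, $q$ on the $(2,1)$ block) checks out.
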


\begin{proof}
  Using the commutation relation $CS = qSC$, one can obtain $S = q^j C^{- j}
  SC^j$, $S^{\dag} = q^{- j} C^{- j} S^{\dag} C^j$. So there are
  \begin{eqnarray}
    [C, C^j a C^{- j}] = C^j  [C, a] C^{- j}, && [C^{\dag}, C^j a C^{- j}]
    = C^j  [C^{\dag}, a] C^{- j},
\nonumber\\{}
    [S, C^j a C^{- j}] = q^{- j} C^j  [S, a] C^{- j}, && [S^{\dag}, C^j a
    C^{- j}] = q^j C^j  [S^{\dag}, a] C^{- j}.
  \end{eqnarray}
From the expression (\ref{mm1}) and (\ref{mm}), we have
  \begin{equation}
    [D_1, C^j a C^{- j}] = \left( \begin{array}{cc}
      0 & - M'\\
      {M'}^{\dag} & 0
    \end{array} \right),
  \end{equation}
where
  \begin{eqnarray}
    M' & = & \left( \begin{array}{cc}
      [C, C^j aC^{- j}] & [S, C^j aC^{- j}]\\{}
      [- S^{\dag}, C^j aC^{- j}] & [C^{\dag}, C^j aC^{- j}]
    \end{array} \right) \nonumber\\
    & = & \left( \begin{array}{cc}
      C^j [C, a] C^{- j} & q^{- j} C^j [S, a] C^{- j}\\
      q^j C^j [- S^{\dag}, a] C^{- j} & C^j [C^{\dag}, a] C^{- j}
    \end{array} \right) \nonumber\\
    & = & U_j \left( \begin{array}{cc}
      [C, a] & [S, a]\\
      {}[- S^{\dag}, a] & [C^{\dag}, a]
    \end{array} \right) U_j^{\dag} \nonumber\\
    & = & U_j MU_j^{\dag}, 
  \end{eqnarray}
  where $U_j$ is the following unitary matrix
  \begin{equation}
    U_j = \left[ \left( \begin{array}{cc}
      1 & 0\\
      0 & q^j
    \end{array} \right) \otimes \mathbb{I}_N \right]  (\mathbb{I}_2 \otimes
    C^j) .
  \end{equation}
  Hence
  \begin{eqnarray}
    {}[D_1, C^j a C^{- j}] & = &  \left( \begin{array}{cc}
      0 & - U_j MU_j^{\dag}\\
      U_j M^{\dag} U_j^{\dag} & 0
    \end{array} \right) \nonumber\\
    & = & (\mathbb{I}_2 \otimes U_j)   \left(
    \begin{array}{cc}
      0 & - M\\
      M^{\dag} & 0
    \end{array} \right)  (\mathbb{I}_2 \otimes U_j^{\dag}) \nonumber\\
    & = & (\mathbb{I}_2 \otimes U_j) [D_1, a]  (\mathbb{I}_2 \otimes
    U_j^{\dag}) . 
  \end{eqnarray}
  Thus
  \begin{equation}
    \|[D_1, C^j a C^{- j}]\|_{op} = \|(\mathbb{I}_2
    \otimes U_j) [D_1, a] (\mathbb{I}_2 \otimes U_j^{\dag})\|_{op} = \|[D_1,
    a]\|_{op} .
  \end{equation}
  Using the triangle inequality for the operator norm, one can obtain
  \begin{eqnarray}
    \|[D_1, E (a)]\|_{op} & = & \left\| \left[ D_1, \frac{1}{N}  \sum_{j =
    1}^N C^j a C^{- j} \right] \right\|_{op} \nonumber\\
    & = & \left\| \frac{1}{N}  \sum_{j = 1}^N [D_1, C^j a C^{- j}]
    \right\|_{op} \nonumber\\
    & \leqslant & \frac{1}{N}  \sum_{j = 1}^N \|[D_1, C^j a C^{- j}]\|_{op}
    \nonumber\\
    & = & \frac{1}{N}  \sum_{j = 1}^N \|[D_1, a]\|_{op} \nonumber\\
    & = & \|[D_1, a]\|_{op} . 
  \end{eqnarray}
  
\end{proof}

The above lemma means that the condition expectation function (\ref{ea}) can lead to a contraction of the corresponding Lipschitz seminorm for the Dirac operator $D_1$, and also those for $D_2$ and $D$.

\begin{theorem}\label{th2}
  In the spectral triple $(A, H, D_1)$, the spectral distance between any two
  diagonal states $\psi, \varphi$ is
  \begin{equation}
    d (\psi, \varphi) = \sup_{e_{\mathrm{diag}} \in B}  \mathrm{tr} [(\psi -
    \varphi) e_{\mathrm{diag}}] = \mathrm{tr} [(\psi - \varphi) e_o] ,
  \end{equation}
  where $e_{\mathrm{diag}}$ denotes a real diagonal matrix, and the optimal
  element $e_o$ can be chosen to be a real diagonal matrix.
\end{theorem}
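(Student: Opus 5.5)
The plan is to combine the diagonal-only dependence of the distance, established in (\ref{pqe}), with the contraction property of the conditional expectation from Lemma~\ref{dea}. First, as noted after the definition of the spectral distance, the supremum is attained by a Hermitian element \cite{Iochum}. We may therefore restrict $B$ to Hermitian $e$. For these the diagonal entries $e_{ii}$ are real, so $E(e)=\mathrm{diag}(e_{11},\ldots,e_{NN})$ is a real diagonal matrix. Moreover, since $\psi-\varphi$ is diagonal,
\begin{equation*}
\mathrm{tr}[(\psi-\varphi)e]=\mathrm{tr}[(\psi-\varphi)E(e)].
\end{equation*}

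Second, Lemma~\ref{dea} gives $\|[D_1,E(e)]\|_{op}\leqslant\|[D_1,e]\|_{op}\leqslant 1$, so $E(e)\in B$. Hence every value $\mathrm{tr}[(\psi-\varphi)e]$ with $e\in B$ Hermitian is also attained by the real diagonal element $E(e)\in B$. This yields
\begin{equation*}
d(\psi,\varphi)\leqslant \sup_{e_{\mathrm{diag}}\in B}\mathrm{tr}[(\psi-\varphi)e_{\mathrm{diag}}].
\end{equation*}
The reverse inequality is immediate, because real diagonal matrices in $B$ are Hermitian elements of $B$, and the sign symmetry $e\mapsto -e$ lets us drop the absolute value. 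This establishes equality of the two suprema.

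Third, we show the supremum is attained by a real diagonal $e_o$. If $e_o$ is any Hermitian optimal element for the original problem, then $E(e_o)$ is real diagonal, lies in $B$, and gives the same value of the trace, so it is also optimal. Alternatively one can argue by compactness: $\mathrm{tr}[(\psi-\varphi)e_{\mathrm{diag}}]$ is unchanged under $e\mapsto e+t\mathbb{I}$ because $\mathrm{tr}(\psi-\varphi)=0$. So we may normalize, e.g. $e_{11}=0$, and on real diagonal matrices modulo $\mathbb{I}$ the seminorm $\|[D_1,\cdot]\|_{op}$ is a norm. Indeed, $[C,e]=0$ holds automatically for diagonal $e$, while $[S,e]=0$ forces all $e_{ii}$ equal. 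The normalized ball is then compact and the continuous linear functional attains its maximum.

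The only subtle point is the existence of a Hermitian optimizer. I would rely on the cited result \cite{Iochum}, or on the compactness argument just sketched, rather than reprove it. The rest is direct from Lemma~\ref{dea} and (\ref{pqe}).
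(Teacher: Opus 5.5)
Your proposal is correct and follows essentially the same route as the paper. Both arguments rest on $\mathrm{tr}[(\psi-\varphi)e]=\mathrm{tr}[(\psi-\varphi)E(e)]$ for diagonal $\psi-\varphi$ and on the contraction $\|[D_1,E(e)]\|_{op}\leqslant\|[D_1,e]\|_{op}$ of Lemma~\ref{dea}, which the paper packages as the inclusion $B\subseteq B'$ and a chain of inequalities; your explicit attainment step (passing to $E(e_o)$, or the compactness argument modulo $\mathbb{I}$) also supplies the real diagonal optimal element, which the paper leaves implicit.
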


\begin{proof}
  Define the sets
  \begin{eqnarray}
    &&B := \{e \in A : e^{\dag} = e, \|[D_1, e]\|_{op} \leqslant 1\}, \nonumber\\
    &&B':= \{e \in A : e^{\dag} = e, \|[D_1, E (e)]\|_{op} \leqslant 1\} .
  \end{eqnarray}
  For any Hermitian element $e \in A$, $E (e) \in A$ is a real diagonal
  matrix. According to Lemma (\ref{dea}),
  \begin{equation}
    \|[D_1, E (e)]\|_{op} \leqslant \|[D_1, e]\|_{op} \leqslant 1,
  \end{equation}
  so for any $e \in B$, we have $e \in B'$, hence $B \subseteq B'$.
  Conversely, if $e \in B'$, then $E (e) \in B$. The spectral distance between
  diagonal states $\psi, \varphi$ is
  \begin{eqnarray}
    d (\psi, \varphi) & = & \sup_{e \in B} \mathrm{tr} [(\psi - \varphi) E
    (e)] \nonumber\\
    & \leqslant & \sup_{e \in B'} \mathrm{tr} [(\psi - \varphi) E (e)]
    \nonumber\\
    & = & \sup_{E (e) \in B} \mathrm{tr} [(\psi - \varphi) E (e)]
    \nonumber\\
    & = & \sup_{e_{\mathrm{diag}} \in B} \mathrm{tr} [(\psi - \varphi)
    e_{\mathrm{diag}}] \nonumber\\
    & \leqslant & \sup_{e \in B} \mathrm{tr} [(\psi - \varphi) e] = d
    (\psi, \varphi), 
  \end{eqnarray}
  where $e_{\mathrm{diag}}$ denotes a real diagonal matrix. So we have
  \begin{equation}
    d (\psi, \varphi) = \sup_{e_{\mathrm{diag}} \in B} \mathrm{tr} [(\psi -
    \varphi) e_{\mathrm{diag}}] .
  \end{equation}
  This means that, to compute the spectral distance between diagonal states $\psi,
  \varphi$, one only needs to consider real diagonal matrices $e$ satisfying
  the ball condition $\|[D_1, e]\|_{op} \leqslant 1$.
  
  \ 
\end{proof}

\begin{lemma}
  For any real diagonal matrix $e = \mathrm{diag} (e_1, \ldots, e_N) \in A$,
  the Lipschitz seminorm $\|[S, e]\|_{op}$ is
  \begin{equation}\label{se}
    \|[S, e]\|_{op} = \max_{k = 1}^N |e_{k + 1} - e_k |,
  \end{equation}
  where $e_{N + 1} \equiv e_1$.
\end{lemma}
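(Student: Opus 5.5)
The plan is to compute $[S,e]$ explicitly in the basis $|i\rangle$ and to observe that it is a weighted cyclic shift. Its operator norm is then the largest absolute weight.

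First, from the action $S|k\rangle = |k+1\rangle$ (indices mod $N$), I write
\[
S=\sum_{k=1}^N |k+1\rangle\langle k|, \qquad e=\sum_{k} e_k |k\rangle\langle k| .
\]
A direct computation gives $Se=\sum_k e_k|k+1\rangle\langle k|$ and $eS=\sum_k e_{k+1}|k+1\rangle\langle k|$, so
\[
[S,e]=\sum_{k=1}^N (e_k-e_{k+1})\,|k+1\rangle\langle k| = S\,\Lambda,\qquad \Lambda=\mathrm{diag}(e_1-e_2,\ldots,e_N-e_1).
\]

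Second, I compute the norm. Since $S$ is unitary, $\|[S,e]\|_{op}=\|S\Lambda\|_{op}=\|\Lambda\|_{op}$. Equivalently, $[S,e]^\dag[S,e]=\Lambda^\dag\Lambda=\mathrm{diag}(|e_k-e_{k+1}|^2)$. The operator norm of a diagonal matrix is the maximum modulus of its entries, which yields $\max_k|e_{k+1}-e_k|$ with the convention $e_{N+1}\equiv e_1$.

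The only point needing care is the cyclic wrap-around term coming from $S|N\rangle=|1\rangle$, which produces the entry $e_N-e_1$. Once $[S,e]$ is written as $S\Lambda$, with $S$ a permutation matrix, there is no real obstacle. Reality of the $e_k$ is not even needed, and the same argument gives $\|[S^\dag,e]\|_{op}$ as well.
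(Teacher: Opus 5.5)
Your proof is correct and is essentially the paper's argument. The paper also reduces to the diagonal matrix $e - S^\dagger e S$: it uses $S^\dagger e S=\mathrm{diag}(e_2,\ldots,e_N,e_1)$ to get $[S,e]^\dagger[S,e]=(S^\dagger eS-e)^2$, which is your $\Lambda^\dagger\Lambda$. Your factorization $[S,e]=S\Lambda$ is slightly cleaner, since it does not need the Hermiticity of $e$ that the paper uses to write $[S,e]^\dagger=-[S^\dagger,e]$.
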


\begin{proof}
Using the matrix representation (\ref{cs}), after some straightforward calculations, one can obtain
  \begin{equation}
    S^{\dagger} eS = \mathrm{diag} (e_2, e_3, \ldots, e_N, e_1).
  \end{equation}
So there is
  \begin{eqnarray}
    {}[S, e]^{\dag} [S, e] & = & - [S^{\dag}, e] SS^{\dag} [S, e] \nonumber\\
    & = & - (S^{\dag} eS - e)  (e - S^{\dag} eS) \nonumber\\
    & = & (S^{\dag} eS - e)^2 \nonumber\\
    & = & \mathrm{diag} [(e_2 - e_1)^2, (e_3 - e_2)^2, \ldots, (e_N - e_{N -
    1})^2, (e_1 - e_N)^2] . 
  \end{eqnarray}
  Hence
  \begin{eqnarray}
    \|[S, e]\|_{op} & = & \max \{|e_2 - e_1 |, |e_3 - e_2 |, \ldots, |e_N -
    e_{N - 1} |, |e_1 - e_N |\} \nonumber\\
    & = & \max_{k = 1}^N |e_{k + 1} - e_k |, 
  \end{eqnarray}
  where $e_{N + 1} \equiv e_1$.
\end{proof}

\begin{corollary}
In the spectral triple $(A, H, D_1)$, for any real diagonal matrix $e =
  \mathrm{diag} (e_1,$ $\ldots, e_N) \in A$, the Lipschitz seminorm is
  \begin{equation}
    \|[D_1, e]\|_{op} = \max_{k = 1}^N |e_{k + 1} - e_k |,
  \end{equation}
  and the ball condition inequality can be written as
  \begin{equation}
    \max_{k = 1}^N |e_{k + 1} - e_k | \leqslant 1.
  \end{equation}
\end{corollary}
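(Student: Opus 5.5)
We compute $\|[D_1,e]\|_{op}$ directly from the block form (\ref{mm1})--(\ref{mm}) and (\ref{dam}), which give $\|[D_1,e]\|_{op}^2=\|M^\dag M\|_{op}$. The starting observation is that $e$ is diagonal, so it commutes with $C$ and $C^\dag$. Hence $[C,e]=[C^\dag,e]=0$, and $M$ reduces to
\begin{equation*}
M=\left(\begin{array}{cc} 0 & [S,e]\\ {}[S,e]^\dag & 0\end{array}\right).
\end{equation*}
Here we use that $e$ is Hermitian, so $-[S^\dag,e]=[S,e]^\dag$. The off-diagonal blocks of $M^\dag M$ in the formula from the proof of the earlier Lemma then vanish, because each contains a factor $[C,e]$. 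What remains is
\begin{equation*}
M^\dag M=\left(\begin{array}{cc} [S,e][S,e]^\dag & 0\\ 0 & [S,e]^\dag[S,e]\end{array}\right).
\end{equation*}

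The operator norm of a block-diagonal matrix is the maximum of the norms of its blocks. Both blocks have norm $\|[S,e]\|_{op}^2$, since $\|XX^\dag\|_{op}=\|X^\dag X\|_{op}=\|X\|_{op}^2$. Therefore $\|[D_1,e]\|_{op}=\|[S,e]\|_{op}$. Invoking (\ref{se}) then yields $\max_{k}|e_{k+1}-e_k|$ with $e_{N+1}\equiv e_1$. As a consistency check, one can also read off $[S,e][S,e]^\dag=S(e-S^\dag eS)^2S^\dag$, which is a cyclic permutation of the same diagonal matrix and so has the same maximum entry.

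For the ball condition, we substitute this expression into the definition of $B$. By Theorem~\ref{th2}, restricting to real diagonal $e$ loses nothing for diagonal states, and the ball condition becomes $\max_k|e_{k+1}-e_k|\le 1$.

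The only real care point is that the spinor-space tensor factor of $[D_1,e]$ does not change the norm. This is already handled by (\ref{dam}): the block form $\mathrm{i}\begin{pmatrix}0&M\\-M^\dag&0\end{pmatrix}$ has square equal to $\mathrm{diag}(MM^\dag,M^\dag M)$. So there is no genuine obstacle beyond checking that the $C$-terms indeed drop out, which follows from $[C,e]=0$.
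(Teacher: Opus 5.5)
Your proposal is correct and follows essentially the same route as the paper. It uses $[C,e]=[C^\dagger,e]=0$ to reduce $M$ to the off-diagonal form with blocks $[S,e]$ and $[S,e]^\dagger$, gets $\|[D_1,e]\|_{op}=\|M\|_{op}=\|[S,e]\|_{op}$ via (\ref{dam}), and finishes with (\ref{se}). The differences are minor: you justify $\|M\|_{op}=\|[S,e]\|_{op}$ explicitly through the block-diagonal form of $M^\dagger M$, which the paper states without comment, and your appeal to Theorem~\ref{th2} for the ball condition is unnecessary, since that part is just substitution.
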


\begin{proof}
  For any diagonal element $e$, from (\ref{cs}) and (\ref{mm}), there are
  \begin{equation}
    [C, e] = 0, \qquad M = \left( \begin{array}{cc}
      0 & [S, e]\\
      {}[S, e]^{\dag} & 0
    \end{array} \right) .
  \end{equation}
  Using the above results (\ref{dam}) and (\ref{se}), one can obtain
  \begin{eqnarray}
    \|[D_1, e]\|_{op} & = & \|M\|_{op} \nonumber\\
    & = & \|[S, e]\|_{op} \nonumber\\
    & = & \max_k |e_{k + 1} - e_k |. 
  \end{eqnarray}
Furthermore, using the ball condition $\|[D_1, e]\|_{op}\leqslant 1$, there is
\begin{equation}
	\max_{k = 1}^N |e_{k + 1} - e_k | \leqslant 1.
\end{equation}
\end{proof}

\begin{theorem}\label{th3}
  In the spectral triple $(A, H, D_1)$, consider any two diagonal states
  \begin{equation}\label{pf}
    \psi = \sum_{i = 1}^N p_i |i \rangle \langle i|, \qquad \varphi = \sum_{i
    = 1}^N q_i |i \rangle \langle i|,
  \end{equation}
  then the spectral distance between $\psi$ and $\varphi$ is
  \begin{equation}\label{sd}
    d (\psi, \varphi) = \sup_{e \in B_o} \mathrm{tr} [(\psi - \varphi) e]
    = \sup_{e \in B_o} \sum_{i = 1}^N (p_i - q_i) e_i,
  \end{equation}
  where the set
  \begin{equation}
    B_o = \left\{ e = \mathrm{diag} (e_1, \ldots, e_N) \in A, e_i \in
    \mathbb{R}: \max_{k = 1}^N |e_{k + 1} - e_k | = 1 \right\} .
  \end{equation}
\end{theorem}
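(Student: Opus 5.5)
The argument combines Theorem~\ref{th2} with the preceding corollary, and then replaces the inequality in the ball condition by an equality. By Theorem~\ref{th2}, $d(\psi,\varphi)=\sup \mathrm{tr}[(\psi-\varphi)e]$, where the supremum runs over real diagonal $e$ with $\|[D_1,e]\|_{op}\leqslant 1$. By the corollary, for real diagonal $e$ this norm equals $\max_k|e_{k+1}-e_k|$. The distance is therefore the supremum of the linear functional $f(e)=\sum_i(p_i-q_i)e_i$ over the set
\begin{equation}
\tilde B=\{e=\mathrm{diag}(e_1,\ldots,e_N),\ e_i\in\mathbb{R}:\ \max_k|e_{k+1}-e_k|\leqslant 1\},
\end{equation}
and the second equality in (\ref{sd}) is just $\mathrm{tr}[(\psi-\varphi)e]=\sum_i(p_i-q_i)e_i$ for diagonal $e$.

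It remains to show that the supremum over $\tilde B$ equals the supremum over $B_o\subseteq\tilde B$. One inequality is immediate from $B_o\subseteq\tilde B$. For the other, I use the scaling structure. Set $L(e)=\max_k|e_{k+1}-e_k|$, which is positively homogeneous. Since $\sum_i(p_i-q_i)=\mathrm{tr}\,\psi-\mathrm{tr}\,\varphi=0$, the functional $f$ is invariant under shifts $e\mapsto e+c\,\mathbb{I}$.

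Now take $e\in\tilde B$. If $L(e)=0$, then $e$ is a multiple of the identity, so $f(e)=0$; any element of $B_o$ is also admissible, and $f(e)$ can be compared against it. If $0<L(e)\leqslant1$, put $e'=e/L(e)\in B_o$. Then $f(e')=f(e)/L(e)$, which is at least $f(e)$ whenever $f(e)\geqslant0$. Elements with $f(e)<0$ do not matter, because the supremum over $\tilde B$ is nonnegative (take $e=0$).

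So the supremum over $\tilde B$ equals $\max\bigl(0,\sup_{B_o}f\bigr)$. The only subtle point is to check that $\sup_{B_o}f\geqslant 0$, so that the value $0$ coming from the constant elements is not lost. This holds because $B_o$ is symmetric under $e\mapsto -e$, so $\sup_{B_o}f\geqslant0$. This also requires $B_o\neq\emptyset$, which is true for $N\geqslant2$: for example $e=\mathrm{diag}(1,0,\ldots,0)$ lies in $B_o$.

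Combining the two inequalities gives (\ref{sd}). The supremum is attained because $f$ and $L$ are invariant under constant shifts, so one may normalize $e_1=0$, which makes $B_o$ compact.
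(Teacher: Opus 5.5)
Your proof is correct and follows the same route as the paper. You reduce via Theorem~\ref{th2} and the preceding corollary to real diagonal $e$ with $\max_k|e_{k+1}-e_k|\leqslant 1$, then rescale by $1/L(e)$ to saturate the constraint. Your element-wise version is more careful than the paper's contradiction argument, which presupposes an optimal $e_o$, divides by $t$ and relies on a strict inequality, steps that break down when $t=0$ or $d(\psi,\varphi)=0$. You handle those cases explicitly, and you also check that $B_o\neq\emptyset$ for $N\geqslant 2$.
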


\begin{proof}
  According to the result (\ref{pqe}) and Theorem (\ref{th2}), the spectral distance between the
  diagonal states $\psi, \varphi$ is
  \begin{eqnarray}
    d (\psi, \varphi) & = & \sup_{e \in B_d} \mathrm{tr} [(\psi - \varphi)
    e] \nonumber\\
    & = & \sup_{e \in B_d} \sum_{i = 1}^N (p_i - q_i) e_i 
    \nonumber\\
    & = & \mathrm{tr} [(\psi - \varphi) e_o]  < + \infty, 
  \end{eqnarray}
  where
  \begin{equation}
    B_d = \left\{ e = \mathrm{diag} (e_1, \ldots, e_N) \in A, e_i \in
    \mathbb{R}: \max_{k = 1}^N |e_{k + 1} - e_k | \leqslant 1 \right\} .
  \end{equation}
  The optimal element $e_o$ is a real diagonal matrix, and its entries satisfy
  $\max_{k} |e_{k + 1} - e_k | \leqslant 1$. If the diagonal entries of the optimal element $e_o$ satisfy
  \begin{equation}
    \max_{k = 1}^N |e_{k + 1} - e_k | = t < 1,
  \end{equation}
  then one can take $e_o' =\mathrm{diag} (e_1', \ldots, e_N')= \frac{1}{t} e_o$. Therefore, there is
  \begin{equation}
    \max_{k = 1}^N |e_{k + 1}' - e_k' | = 1,
  \end{equation}
  and we still have $e_o' \in B_d$. However,
  \begin{eqnarray}
    \mathrm{tr} [(\psi - \varphi) e_o'] & = & \frac{1}{t} 
    \mathrm{tr} [(\psi - \varphi) e_o] \nonumber\\
    & > & \mathrm{tr} [(\psi - \varphi) e_o] \nonumber\\
    & = & \sup_{e \in B_d} \mathrm{tr} [(\psi - \varphi) e] . 
  \end{eqnarray}
  This is a contradiction. Therefore, the diagonal entries of the optimal
  element $e_o$ must satisfy
  \begin{equation}
    \max_{k = 1}^N |e_{k + 1} - e_k | = 1.
  \end{equation}
\end{proof}

\section{Some simple examples}\label{sec4}
From Theorem (\ref{th3}), one can calculate the spectral distances
of any diagonal states, and only need to consider the optimal elements being real diagonal matrices.
But it is usually still very cumbersome to derive the explicit results.
In the following content, we will only explicitly calculate the spectral distances
of some simple examples.

First, let us consider the basic states $|i \rangle$, and the correspond density matrices are:
\begin{equation}
  \omega_i = |i \rangle \langle i|, \qquad 1\leqslant i \leqslant N.
\end{equation}
In the spectral triple $(A, H, D_1)$, for any $1 \leqslant i < N$, consider
the spectral distance between the basis states $|i \rangle, |i + 1 \rangle$. According to the above formula (\ref{sd}), there is
\begin{eqnarray}
  d (|i \rangle, |i + 1 \rangle) & = & d (\omega_i, \omega_{i + 1})
  \nonumber\\
  & = & \sup_{e \in B_o} \sum_{i = 1}^N (p_i - q_i) e_i
  \nonumber\\
  & = & e_i - e_{i + 1} \nonumber\\
  & = & 1, 
\end{eqnarray}
and the optimal element $e_o$ can be taken as
\begin{equation}
  e_o = \frac{1}{2} (|i \rangle \langle i| - |i + 1 \rangle  \langle
  i + 1|) .
\end{equation}

From Theorem (\ref{th3}), it is easy to see that, there are some kinds of cyclic symmetry in both the optimal elements and the spectral distances between the diagonal states, such as $|1 \rangle, |2 \rangle, \ldots, |N \rangle$.

\begin{corollary}
  The spectral distances between the basis states of the fuzzy torus $(A, H,
  D_1)$ are
  \begin{equation}
    d (|1 \rangle, |2 \rangle) = d (|2 \rangle, |3 \rangle) = \ldots = d (|N
    \rangle, |1 \rangle) = 1.
  \end{equation}
\end{corollary}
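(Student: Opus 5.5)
The plan is to compute each distance directly from Theorem~\ref{th2} and the preceding corollary. For diagonal states, $d(\psi,\varphi)$ is the supremum of $\sum_i (p_i-q_i)e_i$ over real diagonal $e$ satisfying the ball condition $\max_k |e_{k+1}-e_k|\leqslant 1$, with indices read cyclically so that $e_{N+1}\equiv e_1$. For the basis states $\omega_i$ and $\omega_{i+1}$, with $1\leqslant i\leqslant N$ and $\omega_{N+1}\equiv\omega_1$, we have $\psi-\varphi=\omega_i-\omega_{i+1}$. The functional to maximize is therefore just $e_i-e_{i+1}$.

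The upper bound comes straight from the ball condition. Since the pair $(i,i+1)$ is one of the cyclic neighbour pairs appearing in $\max_k|e_{k+1}-e_k|$, this includes the pair $(N,1)$. Hence $e_i-e_{i+1}\leqslant |e_{i+1}-e_i|\leqslant 1$, which gives $d(\omega_i,\omega_{i+1})\leqslant 1$ for every $i$.

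For the lower bound I would exhibit an admissible diagonal element attaining the value $1$. The simplest choice is $e=|i\rangle\langle i|$, meaning $e_i=1$ and all other entries $0$, which covers $i=N$ as well. Its cyclic differences are $0$ or $\pm1$, so $\|[D_1,e]\|_{op}=\max_k|e_{k+1}-e_k|=1$ and $e\in B$. Evaluating gives $\mathrm{tr}[(\omega_i-\omega_{i+1})e]=1$, so $d(\omega_i,\omega_{i+1})=1$ for all $i$, with $i=N$ giving $d(|N\rangle,|1\rangle)=1$.

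Alternatively, the case $i=N$ can be reduced to the case $i<N$ by the cyclic symmetry. Conjugation by $S$ maps $|k\rangle\langle k|$ to $|k+1\rangle\langle k+1|$ cyclically, and on diagonal elements it only shifts indices cyclically, so it preserves the ball condition of the preceding corollary. It therefore preserves distances between diagonal states. The only point requiring care is that the wrap-around pair $(N,1)$ genuinely appears in the seminorm $\|[S,e]\|_{op}$, and this is ensured by the convention $e_{N+1}\equiv e_1$ in the lemma computing $\|[S,e]\|_{op}$. I expect no real obstacle beyond this bookkeeping.
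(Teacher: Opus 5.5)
Your proposal is correct and follows essentially the paper's route: reduce to real diagonal elements via Theorem~\ref{th2} and the cyclic ball condition $\max_k|e_{k+1}-e_k|\leqslant 1$, bound $e_i-e_{i+1}$ by $1$, and exhibit an admissible element attaining this bound. The differences are minor. The paper uses $e_o=\frac12(|i\rangle\langle i|-|i+1\rangle\langle i+1|)$ for $i<N$ and appeals to cyclic symmetry for the pair $(N,1)$, whereas your witness $|i\rangle\langle i|$ treats every $i$, including $i=N$, uniformly, and you state explicitly the upper bound that the paper leaves implicit.
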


For any $2 \leqslant i \leqslant \frac{N}{2} + 1$, the spectral distance
between the basis states $|1 \rangle$ and $|i \rangle$ of the fuzzy
torus $(A, H, D_1)$ is
\begin{eqnarray}
  d (|1 \rangle, |i \rangle) = d (\omega_1, \omega_i) & = & \sup_{e \in B_o}
  \sum_{i = 1}^N (p_i - q_i) e_i \nonumber\\
  & = & e_1 - e_i  \nonumber\\
  & \leqslant & \sum_{t = 1}^{i - 1} |e_t - e_{t + 1} | \nonumber\\
  & \leqslant & i - 1. 
\end{eqnarray}
So one can take the optimal element $e_o = \mathrm{diag} (e_1, \ldots, e_N)$
with entries as
\begin{eqnarray}
&&  e_1 =  i - 1, \quad e_2 = i - 2, \quad
  \ldots, \quad e_{i - 1} = 1 , \quad e_i = 0,
\nonumber\\
&&  e_{i + 1} = \frac{i - 1}{N + 1 - i}, \quad e_{i + 2} =
  \frac{2 (i - 1)}{N + 1 - i}, \quad \ldots, \quad e_N =
  \frac{(N - i)  (i - 1)}{N + 1 - i} .
\end{eqnarray}
Hence there is
\begin{equation}
  d (|1 \rangle, |i \rangle) = e_1 - e_i = i - 1.
\end{equation}
\begin{corollary}
  For any $2 \leqslant i \leqslant \frac{N}{2} + 1$, the spectral distances
  between the basis states of the fuzzy torus $(A, H, D_1)$ are
  \begin{equation}
    d (|1 \rangle, |i \rangle) = d (|2 \rangle, |i + 1 \rangle) = \ldots = d
    (|N - i + 1 \rangle, |N \rangle) = i - 1.
  \end{equation}
\end{corollary}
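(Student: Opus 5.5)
Since the basis states $\omega_k = |k\rangle\langle k|$ are diagonal, Theorem~\ref{th3} applies to every pair in the chain. It reduces each distance $d(|k\rangle, |k+i-1\rangle)$ to the supremum of $e_k - e_{k+i-1}$ over real diagonal $e$ satisfying $\max_{m}|e_{m+1}-e_m|\leqslant 1$, with indices read cyclically, $e_{N+1}\equiv e_1$. I would prove the statement in two steps. First, show that this reduced optimisation problem is invariant under a cyclic shift of indices. Second, invoke the already computed value $d(|1\rangle,|i\rangle)=i-1$ for $2\leqslant i\leqslant \frac{N}{2}+1$.

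For the invariance step I would use the unitary $S$ directly. Since $S|m\rangle = |m+1\rangle$ (cyclically), we have $S^{k-1}\omega_1 S^{-(k-1)} = \omega_k$ and $S^{k-1}\omega_i S^{-(k-1)} = \omega_{i+k-1}$. For a real diagonal $e=\mathrm{diag}(e_1,\ldots,e_N)$, conjugation gives
\begin{equation}
S^{-(k-1)} e\, S^{k-1} = \mathrm{diag}(e_k, e_{k+1}, \ldots, e_N, e_1, \ldots, e_{k-1}),
\end{equation}
which generalises the identity $S^{\dag}eS=\mathrm{diag}(e_2,\ldots,e_N,e_1)$ from the proof of the lemma on $\|[S,e]\|_{op}$. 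The set of consecutive differences $\{|e_{m+1}-e_m|\}$, including the wrap-around difference $|e_1-e_N|$, is unchanged by this cyclic relabelling. Hence, by the corollary that gives $\|[D_1,e]\|_{op}=\max_m|e_{m+1}-e_m|$, the map $e\mapsto S^{-(k-1)}eS^{k-1}$ is a bijection of the diagonal ball $B_d$ onto itself.

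Then
\begin{equation}
\mathrm{tr}[(\omega_k-\omega_{k+i-1})e] = \mathrm{tr}[(\omega_1-\omega_i)\,S^{-(k-1)}eS^{k-1}],
\end{equation}
and taking suprema over $B_d$ gives $d(|k\rangle,|k+i-1\rangle)=d(|1\rangle,|i\rangle)=i-1$ for every $k=1,\ldots,N-i+1$. This is exactly the chain of equalities in the statement. The same argument extends to all cyclic shifts modulo $N$, consistent with the cyclic symmetry remarked after Theorem~\ref{th3}.

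The main obstacle is not the symmetry but the input value $i-1$, specifically its lower bound. The upper bound $e_1-e_i\leqslant\sum_{t=1}^{i-1}|e_t-e_{t+1}|\leqslant i-1$ is the telescoping estimate already given in the text. For the lower bound one must check that the proposed $e_o$ lies in $B_d$, i.e.\ that the wrap-around portion of the cycle respects the ball condition. That portion runs from $e_i=0$ up through $e_N$ and back to $e_1=i-1$, a total rise of $i-1$ spread over $N+1-i$ steps. It has uniform step $\frac{i-1}{N+1-i}$, which is $\leqslant 1$ precisely when $i\leqslant \frac{N}{2}+1$. I would verify this inequality explicitly, together with the final step $e_1-e_N=\frac{i-1}{N+1-i}$. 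This is exactly where the hypothesis on $i$ enters, and it confirms that the supremum is attained.
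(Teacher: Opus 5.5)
Your proposal is correct and follows essentially the paper's route: the paper bounds $e_1-e_i$ by the same telescoping sum, exhibits the same explicit $e_o$ to get $d(|1\rangle,|i\rangle)=i-1$, and then obtains the chain of equalities from the cyclic symmetry of the constraint $\max_k|e_{k+1}-e_k|\leqslant 1$ in Theorem~\ref{th3}. Your conjugation by $S^{k-1}$ makes that cyclic symmetry explicit, and your check that the wrap-around step $\frac{i-1}{N+1-i}\leqslant 1$ holds exactly when $i\leqslant\frac{N}{2}+1$ supplies a verification the paper leaves implicit.
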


One can see that the spectral distances between basis states of
the fuzzy torus do not depend on $N$.

\begin{corollary}
  For any $1 \leqslant i \leqslant j \leqslant N$, the spectral distance between the basis states
  $|i \rangle$ and $|j \rangle$ of the fuzzy torus $(A, H, D_1)$ is
  \begin{equation}\label{ij}
    d (|i \rangle, |j \rangle) = d (|j \rangle, |i \rangle) = \min \left\{ |i - j|, N - |i - j| \right\}.
  \end{equation}
\end{corollary}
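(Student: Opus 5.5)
By Theorem \ref{th3}, the spectral distance between the basis states $\omega_i$ and $\omega_j$ equals $\sup_{e\in B_d}(e_i-e_j)$. Here $B_d$ is the set of real diagonal $e$ satisfying $|e_{k+1}-e_k|\leqslant 1$ for all $k$, with indices read cyclically ($e_{N+1}\equiv e_1$). So the problem reduces to a Lipschitz extremal problem on the cycle graph $\mathbb{Z}_N$ with unit edge weights. I would prove that this supremum equals the graph distance $\min\{|i-j|,N-|i-j|\}$ by establishing an upper bound and then exhibiting a matching element. Symmetry $d(|i\rangle,|j\rangle)=d(|j\rangle,|i\rangle)$ is immediate from the definition, so I may assume $i<j$ and put $m=j-i$.

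\textbf{Upper bound.} For any $e\in B_d$ the cycle gives two paths from $i$ to $j$. Telescoping along the short arc gives
\[
e_i-e_j=\sum_{t=i}^{j-1}(e_t-e_{t+1})\leqslant m .
\]
Telescoping along the complementary arc $j,j+1,\ldots,N,1,\ldots,i$ (which has $N-m$ edges, using $e_{N+1}=e_1$) gives
\[
e_i-e_j=\sum_{t}(e_{t+1}-e_t)\leqslant N-m .
\]
Hence $e_i-e_j\leqslant\min\{m,N-m\}$ for every $e\in B_d$, and taking the supremum yields $d(|i\rangle,|j\rangle)\leqslant\min\{m,N-m\}$.

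\textbf{Attainment.} Set $L=\min\{m,N-m\}$. I would take the distance function $e_k=L-\mathrm{dist}_{\mathbb{Z}_N}(k,i)$, truncated below at $0$ if needed. This is the standard 1-Lipschitz choice, because graph distance to a fixed vertex changes by at most $1$ across each edge, and truncation by $\max\{\cdot,0\}$ preserves the Lipschitz property. It gives $e_i=L$ and $e_j=0$, since $\mathrm{dist}(i,j)=L$, so $e_i-e_j=L$.

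Alternatively, I could generalize the explicit element written before the preceding corollary. There the values descend linearly by $1$ along the short arc and climb back with step $\leqslant 1$ along the long arc; the long-arc step is $L/(N-L)\leqslant 1$ because $L\leqslant N/2$. This choice also shows the supremum is attained with $\max_k|e_{k+1}-e_k|=1$, consistent with $B_o$.

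\textbf{Reduction to the earlier corollaries.} The cyclic relabeling $k\mapsto k+1$ is implemented by conjugation with $S$. It preserves $B_d$ because it preserves the cyclic constraint set, so one may reduce to $i=1$. The case $2\leqslant j\leqslant N/2+1$ is then exactly the preceding corollary. The remaining case $j>N/2+1$ reduces to it by the reflection $k\mapsto 2-k \pmod N$, which also preserves the cyclic constraints and maps $j$ to $N+2-j\leqslant N/2+1$, giving distance $N-(j-1)=N-m$.

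The only delicate point I expect is the bookkeeping at the wrap-around edge $(N,1)$ in the long-arc telescoping, including the degenerate case $i=j$ where both sides are zero. The rest is routine.
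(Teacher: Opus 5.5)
Your proposal is correct and follows essentially the same route as the paper. You reduce via Theorem~\ref{th3} to real diagonal $e$ with cyclic constraints $|e_{k+1}-e_k|\leqslant 1$, bound $e_i-e_j$ by telescoping, exhibit an explicit maximizer, and invoke cyclic symmetry; the only difference is that you telescope along both arcs (with the graph-distance function as maximizer), which handles $|i-j|>N/2$ directly, whereas the paper bounds only along the short arc for $|1\rangle,|i\rangle$ with $i\leqslant N/2+1$ and leaves the remaining case to cyclic symmetry.
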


Next, let us consider the spectral distances between some simple mixed diagonal states.
Let us first consider the following states, for any $1 \leqslant i \leqslant j \leqslant N$,
\begin{equation}
	\psi =  p|i \rangle \langle i|+(1-p)|j\rangle \langle j|, \quad \varphi = q|i \rangle \langle i|+(1-q)|j\rangle \langle j|, \quad 0 \leqslant p, q \leqslant 1.
\end{equation}
The spectral distance between $\psi$ and $\varphi$ is
\begin{eqnarray}
	d (\psi, \varphi) & = & \sup_{e \in B_o} \left| \sum_{i = 1}^N (p_i - q_i)
	e_i \right| \nonumber\\
	& = & |p - q|\cdot \sup_{e \in B_o} |e_i - e_j| .
\end{eqnarray}
By virtue of the above result (\ref{ij}), it is easy to see that, there is
\begin{equation}
	d (\psi, \varphi) = | p - q | \cdot \min \left\{ |i - j|, N - |i - j| \right\}.
\end{equation}

\begin{proposition}\label{pr2}
  In the fuzzy torus $(A, H, D_1)$, consider two diagonal states
  \begin{equation}
    \psi = p|i \rangle \langle i|+(1-p)|i+1 \rangle \langle i+1|, \quad 0 \leqslant p\leqslant1,1\leqslant i< N, \qquad \varphi
    = \frac{1}{N} \mathbb{I}_N.
  \end{equation}
  When $N$ is even, the spectral distance between $\psi$ and $\varphi$ is
  \begin{equation}
    d (\psi, \varphi) = \left\{ \begin{array}{ll}
      \dfrac{N}{4} - p, & 0 \leqslant p
      \leqslant \dfrac{1}{2} ;\\[1em]
      \dfrac{N}{4} - 1 + p , & \dfrac{1}{2}
      \leqslant p \leqslant 1.
    \end{array} \right.
  \end{equation}
  When $N$ is odd, the spectral distance between $\psi$ and $\varphi$ is
  \begin{equation}
    d (\psi, \varphi) = \left\{ \begin{array}{ll}
      \dfrac{N^2 - 1}{4 N} - p , & 0
      \leqslant p \leqslant \dfrac{N - 1}{2 N} ;\\[1em]
      \dfrac{(N - 1)^2}{4 N}, & \dfrac{N - 1}{2 N} \leqslant
      p \leqslant \dfrac{N + 1}{2 N} ;\\[.5em]
      \dfrac{N^2 - 1}{4 N} - 1 + p , &
      \dfrac{N + 1}{2 N} \leqslant p \leqslant 1.
    \end{array} \right.
  \end{equation}
\end{proposition}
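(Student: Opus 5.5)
By the cyclic symmetry noted after Theorem \ref{th3} (conjugation by $S$ permutes the basis states cyclically, preserves $\varphi=\frac1N\mathbb{I}_N$ and the ball condition), we may take $i=1$, so $\psi=p|1\rangle\langle1|+(1-p)|2\rangle\langle2|$. By Theorem \ref{th3} and the Corollary giving the diagonal Lipschitz seminorm, the distance is the value of the linear program
\[
d(\psi,\varphi)=\sup\Big\{\sum_{k=1}^N (p_k-\tfrac1N)e_k \;:\; e_k\in\mathbb{R},\ |e_{k+1}-e_k|\leqslant 1,\ e_{N+1}\equiv e_1\Big\}.
\]
Here $p_1=p$, $p_2=1-p$ and $p_k=0$ for $k\geqslant3$. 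This is the Kantorovich--Rubinstein (Wasserstein-1) distance on the cycle graph $\mathbb{Z}_N$. I will compute it by reducing it to a one-parameter minimization.

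\emph{Step 1 (summation by parts).} Put $g_k=e_{k+1}-e_k$, so $|g_k|\leqslant1$ and $\sum_k g_k=0$. Put $F_k=\sum_{j\leqslant k}(p_j-\frac1N)$, so $F_N=0$. Summation by parts gives
\[
\sum_k (p_k-\tfrac1N)e_k=-\sum_{k=1}^{N} F_k g_k .
\]
Hence the distance is $\sup\{-\sum_k F_kg_k : |g_k|\leqslant1,\ \sum_k g_k=0\}$.

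\emph{Step 2 (duality).} For every $c\in\mathbb{R}$ we have $-\sum_k F_kg_k=-\sum_k(F_k-c)g_k\leqslant\sum_k|F_k-c|$, which is the upper bound. Conversely, take $c$ to be a median of the multiset $\{F_k\}$ and set $g_k=-\mathrm{sgn}(F_k-c)$. On entries with $F_k=c$, choose $g_k\in[-1,1]$ so that $\sum_k g_k=0$; this is possible exactly because $c$ is a median. Then equality holds, so
\[
d=\min_{c\in\mathbb{R}}\sum_k|F_k-c|.
\]

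\emph{Step 3 (computation).} The partial sums are $F_1=p-\frac1N$ and $F_k=1-\frac kN$ for $2\leqslant k\leqslant N$, i.e.\ $F_N=0$. So the multiset is $\{p-\frac1N\}\cup\{0,\frac1N,\dots,\frac{N-2}{N}\}$.

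For $N$ even, the $N-1$ equally spaced values $0,\frac1N,\dots,\frac{N-2}{N}$ have median $\frac{N-2}{2N}$. Adding $p-\frac1N$ shifts the median interval to one side or the other. Computing $\sum_k|F_k-c|$ there gives $\frac N4-p$ for $p\leqslant\frac12$ and $\frac N4-1+p$ for $p\geqslant\frac12$.

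For $N$ odd, the median lies between the two central values $\frac{N-3}{2N}$ and $\frac{N-1}{2N}$. The three regimes of the proposition correspond to whether $p-\frac1N$ lies below, inside, or above this central gap; in the middle regime the value is the constant $\frac{(N-1)^2}{4N}$.

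As a cross-check, I will exhibit explicit optimal elements. For $N$ even and $p\leqslant\frac12$, take $e_k=-\mathrm{dist}_{\mathbb{Z}_N}(k,2)$. Then $\mathrm{tr}(\psi e)=-p$, and $\mathrm{tr}(\varphi e)=-\frac1N\sum_k\mathrm{dist}_{\mathbb{Z}_N}(k,2)=-\frac N4$, so the difference is $\frac N4-p$. The other regimes use analogous ``tent'' elements centred at vertex $1$, or between $1$ and $2$.

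The main obstacle is Step 3 for odd $N$: correctly locating the median and the breakpoints $p=\frac{N\mp1}{2N}$, and summing $|F_k-c|$ without off-by-one errors. Steps 1--2 are routine once the problem is recast as an $\ell^1$ minimization.
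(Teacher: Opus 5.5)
Your proof is correct, and its decisive step differs from the paper's.

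\textbf{Shared starting point.} Both proofs begin with the same summation by parts. The appendix writes the objective as $(p-\frac1N)\Delta_{12}+\frac{N-2}{N}\Delta_{23}+\cdots+\frac1N\Delta_{N-1,N}$, which is your $-\sum_kF_kg_k$.

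\textbf{Where the routes split.} The paper then asserts, as ``easy to see'' and without proof, that a maximizer has every $\Delta_{k,k+1}\in\{0,\pm1\}$ with $e_1-e_N=1$. It uses the ordering of the coefficients to choose a greedy sign pattern in each regime. The resulting elements are, up to additive constants, exactly the tent functions in your cross-check, and evaluating them gives the formulas.

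Your weak-duality bound $-\sum_kF_kg_k\le\sum_k|F_k-c|$, together with the median construction, replaces that unproved structural claim with an actual optimality certificate. It identifies the value as $\min_c\sum_k|F_k-c|$ and applies unchanged to any pair of diagonal states. You also justify the reduction from general $i$ to $i=1$ by the cyclic shift, which the paper uses without comment.

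\textbf{What each buys.} The paper's route is hands-on and produces the optimal elements directly. Yours is what makes the upper bound rigorous.

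\textbf{Step 3 closes.} The arithmetic you deferred works out as follows.
For $N=2m$, the point $c=\frac{m-1}{N}$ lies in the median interval for every $p$. This gives $d=\frac1N\sum_{j=0}^{2m-2}|j-(m-1)|+|p-\frac12|=\frac{m-1}{2}+|p-\frac12|$.
For $N=2m+1$, the function $h(y)=\sum_{j=0}^{2m-1}|j-y|$ equals $m^2$ on $[m-1,m]$ and grows with slope at least $2$ outside it. This gives $d=\frac1N\bigl(m^2+\mathrm{dist}(Np-1,[m-1,m])\bigr)$.
Both reproduce the stated piecewise expressions, including the breakpoints $p=\frac{N\mp1}{2N}$.
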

\begin{proof}
	See the appendix for a detailed proof. 
\end{proof}
For example, when $N = 3$,
there are
\begin{equation}
  \psi = \mathrm{diag} (p, 1 - p, 0), \quad \varphi = \frac{1}{3}
  \mathrm{diag} (1, 1, 1), \quad 0 \leqslant p \leqslant 1.
\end{equation}
The spectral distance between $\psi$ and $\varphi$ is
\begin{equation}
  d (\psi, \varphi) = \left\{ \begin{array}{ll}
    \dfrac{2}{3} - p , & 0 \leqslant p
    \leqslant \dfrac{1}{3} ;\\[1em]
     \dfrac{1}{3}, & \dfrac{1}{3} \leqslant p \leqslant
    \dfrac{2}{3} ;\\[1em]
    p - \dfrac{1}{3}, & \dfrac{2}{3}
    \leqslant p \leqslant 1.
  \end{array} \right.
\end{equation}
The spectral distances of some states in this case are depicted in Figure \ref{fig1}.

\begin{figure}
	\centering
	\includegraphics[width=0.5\textwidth]{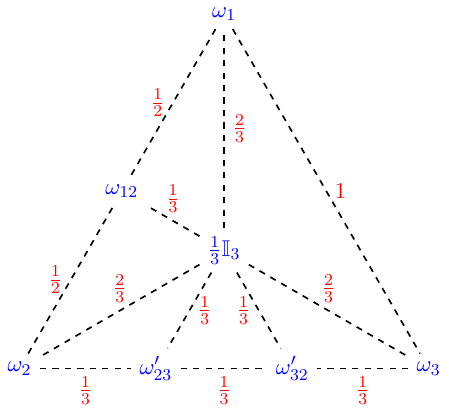}
	\caption{\label{fig1}Connes spectral distances between some states in the case $N=3$, where
		$\omega_i=|i\rangle \langle i|$, $\omega_{ij}=\frac{1}{2}(|i\rangle \langle i|+|j\rangle \langle j|)$, and $\omega_{ij}'=\frac{2}{3}|i\rangle \langle i|+\frac{1}{3}|j\rangle \langle j|$.}
\end{figure}

When $N = 4$, there are
\begin{equation}
  \psi = \mathrm{diag} (p, 1 - p, 0, 0), \quad \varphi = \frac{1}{4}
  \mathrm{diag} (1, 1, 1, 1), \quad 0 \leqslant p \leqslant 1.
\end{equation}
Then the spectral distance between $\psi$ and $\varphi$ is
\begin{equation}
  d (\psi, \varphi) = \left\{ \begin{array}{ll}
    1 - p, & 0 \leqslant p \leqslant \dfrac{1}{2}
    ;\\[1em]
     p, & \dfrac{1}{2} \leqslant p \leqslant 1.
  \end{array} \right.
\end{equation}

For the spectral distance between arbitrary non-diagonal states, it is
generally very difficult to calculate. For simplicity, we only consider the
case $N = 2$. In this case,
\begin{equation}
  C = \sigma_3 = \left( \begin{array}{cc}
    1 & 0\\
    0 & - 1
  \end{array} \right), \qquad S = \sigma_1 = \left( \begin{array}{cc}
    0 & 1\\
    1 & 0
  \end{array} \right),
\end{equation}
Obviously, any matrix $e \in A$ and $e +\mathbb{I}$ satisfy the same ball
condition and distance formula
\begin{equation}
  \|[D_1, e]\|_{op} = \|[D_1, e +\mathbb{I}]\|_{op} \leqslant 1, \quad
  \mathrm{tr} (\Delta \rho e) = \mathrm{tr} [\Delta \rho (e +\mathbb{I})],
\end{equation}
so one can only consider traceless optimal elements, $\mathrm{tr} (e_o) = 0$.

Any $2 \times 2$ traceless Hermitian matrix $e$ can be expressed by the Pauli
matrices as
\begin{equation}
  e = \vec{e} \cdot \vec{\sigma} = e_1 \sigma_1 + e_2 \sigma_2 + e_3 \sigma_3,
\end{equation}
where $\vec{e} = (e_1, e_2, e_3)$ is a real vector, $\vec{\sigma} = (\sigma_1,
\sigma_2, \sigma_3)$, and $\sigma_i$ are the Pauli matrices. Using the
following formulas of the Pauli matrices
\begin{eqnarray}
  &&\sigma_i \sigma_j = \delta_{ij} \mathbb{I}+ \mathrm{i} \varepsilon_{ijk}
  \sigma_k,
\nonumber\\
  &&(\vec{a} \cdot \vec{\sigma})  (\vec{b} \cdot \vec{\sigma}) = (\vec{a} \cdot
  \vec{b}) \mathbb{I}_2 + \mathrm{i} (\vec{a} \times \vec{b}) \cdot
  \vec{\sigma},
\nonumber\\
  &&[\vec{a} \cdot \vec{\sigma}, \vec{b} \cdot \vec{\sigma}] = 2 \mathrm{i}
  (\vec{a} \times \vec{b}) \cdot \vec{\sigma},
\end{eqnarray}
one can obtain
\begin{eqnarray}
  &&[C, e] = [\sigma_3, \vec{e} \cdot \vec{\sigma}] = 2 \mathrm{i} (e_1 \sigma_2 -
  e_2 \sigma_1),
\nonumber\\
  &&[S, e] = [\sigma_1, \vec{e} \cdot \vec{\sigma}] = 2 \mathrm{i} (e_2 \sigma_3 -
  e_3 \sigma_2),
\end{eqnarray}
and then
\begin{eqnarray}
  M & = & \left( \begin{array}{cc}
    {}[C, e] & [S, e]\\
    {}[S, e]^{\dag} & - [C, e]^{\dag}
  \end{array} \right) \nonumber\\
  & = & 2 \mathrm{i} [\mathbb{I}_2 \otimes (e_1 \sigma_2 - e_2 \sigma_1) + \mathrm{i}
  \sigma_2 \otimes (e_2 \sigma_3 - e_3 \sigma_2)] . 
\end{eqnarray}
After some straightforward calculations, one can derive
\begin{equation}
  M^{\dagger} M = 4 (e_1^2 + 2 e_2^2 + e_3^2) \mathbb{I}_4 - 8 e_2  (\sigma_2
  \otimes e) .
\end{equation}
The eigenvalues of $e = \vec{e} \cdot \vec{\sigma}$ are $\{| \vec{e} |, - |
\vec{e} |\}$, and then the eigenvalues of $\sigma_2 \otimes e$ are $\{|
\vec{e} |, | \vec{e} |, - | \vec{e} |, - | \vec{e} |\}$. So the eigenvalues of
$M^{\dagger} M$ are
\[ 4 (e_1^2 + 2 e_2^2 + e_3^2) \pm 8 e_2 | \vec{e} | = 4 (| \vec{e} | \pm
   e_2)^2 \]
Therefore, we have
\begin{equation}
  \|M\|_{op} = 2 (| \vec{e} | + | e_2 |) = 2 (|e_2 | + \sqrt{e_1^2 + e_2^2 +
  e_3^2}) = 2 (|e_2 | + \sqrt{e_2^2 + \theta^2}),
\end{equation}
where $\theta = \sqrt{e_1^2 + e_3^2}$. Using the ball condition, one can
obtain the inequality
\begin{equation}
  |e_2 | + \sqrt{e_2^2 + \theta^2} \leqslant \frac{1}{2} .
\end{equation}
Obviously, there are
\begin{equation}
  |e_2 | \leqslant \frac{1}{4}, \qquad \theta \leqslant \frac{1}{2} .
\end{equation}

In general, a density matrix $\rho \in  \mathbb{M}_2 (\mathbb{C})$ can be uniquely
represented by a Bloch vector {\cite{Nielsen}}
\begin{equation}\label{rr}
  \rho = \frac{I + \vec{r} \cdot \vec{\sigma}}{2},
\end{equation}
where the real vector $\vec{r} = (x, y, z)$ is called the Bloch vector, $|
\vec{r} | \leqslant 1$.

Consider two states $\rho_1$ and $\rho_2$ with corresponding Bloch vectors
$\vec{r}_1 = (x_1, y_1, z_1)$ and $\vec{r}_2 = (x_2, y_2, z_2)$. Using the above
matrix representation (\ref{rr}), one can obtain
\begin{equation}
  \Delta \rho = \rho_1 - \rho_2 = \frac{1}{2} \Delta \vec{r} \cdot
  \vec{\sigma},
\end{equation}
where $\Delta \vec{r} = \vec{r}_1 - \vec{r}_2 = (\Delta x, \Delta y, \Delta
z)$, and $\Delta x = x_1 - x_2$, $\Delta y = y_1 - y_2$, $\Delta z = z_1 -
z_2$. Therefore, we have
\begin{equation}
  \mathrm{tr} (\Delta \rho   e) = \mathrm{tr} \left[ \frac{1}{2}
  (\Delta \vec{r} \cdot \vec{\sigma}) (\vec{e} \cdot \vec{\sigma}) \right] =
  \Delta \vec{r} \cdot \vec{e} = e_1 \Delta x + e_2 \Delta y + e_3 \Delta z.
\end{equation}
The spectral distance between the states $\rho_1$ and $\rho_2$ is
\begin{eqnarray}\label{d12}
  d (\rho_1, \rho_2) & = & \sup_{e \in B} | \mathrm{tr} (\Delta \rho
    e) | \nonumber\\
  & = & \sup_{e \in B} |e_1 \Delta x + e_2 \Delta y + e_3 \Delta z|
  \nonumber\\
  & \leqslant & \sup_{e \in B} (|e_2 \Delta y| + |e_1 \Delta x + e_3 \Delta
  z|) \nonumber\\
  & \leqslant & \sup_{e \in B} \left( |e_2 | \cdot | \Delta y| + \sqrt{e_1^2
  + e_3^2}  \sqrt{(\Delta x)^2 + (\Delta z)^2} \right) \nonumber\\
  & = & \sup_{e \in B} \left( |e_2 | \cdot | \Delta y| + \theta \sqrt{(\Delta
  x)^2 + (\Delta z)^2} \right) . 
\end{eqnarray}
In the last inequality, we used the Cauchy-Schwarz inequality, $(a_1 b_1 + a_2
b_2)^2 \leqslant (a_1^2 + a_2^2)  (b_1^2 + b_2^2)$, where equality holds if
and only if $a_1 b_2 = a_2 b_1$. Therefore, one can choose the optimal element
$e$ to satisfy
\begin{equation}
  e_3 \Delta x = e_1 \Delta z.
\end{equation}
Obviously, for given states $\rho_1$ and $\rho_2$, in order to maximize $|
\mathrm{tr} (\Delta \rho   e) |$, one should choose an Hermitian
element $e$ such that $|e_2 |$ and $\theta = \sqrt{e_1^2 + e_3^2}$ are as
large as possible. Hence in this case we have
\begin{equation}
  |e_2 | + \sqrt{e_2^2 + \theta^2} = \frac{1}{2},
\end{equation}
or
\begin{equation}\label{e2}
  |e_2 | = \frac{1}{4} - \theta^2 .
\end{equation}
Using the above equations (\ref{d12}) and (\ref{e2}), there are
\begin{eqnarray}
  d (\rho_1, \rho_2) & \leqslant & \sup_{e \in B} \left( |e_2 | \cdot | \Delta
  y| + \theta \sqrt{(\Delta x)^2 + (\Delta z)^2} \right) \nonumber\\
  & = & \sup_{e \in B} \left( \frac{1}{4 } | \Delta y| - 
  \theta^2 | \Delta y| + \theta \sqrt{(\Delta x)^2 + (\Delta z)^2} \right)
  \nonumber\\
  & = & \sup_{e \in B} \left[ \frac{| \Delta \vec{r} |^2}{4 | \Delta
  y|} - | \Delta y| \left( \theta - \frac{\sqrt{(\Delta x)^2 +
  (\Delta z)^2}}{2 | \Delta y|} \right)^2 \right], 
\end{eqnarray}
where $| \Delta \vec{r} | = \sqrt{(\Delta x)^2 + (\Delta y)^2 + (\Delta z)^2}
\leqslant 2$.

The ball condition $e \in B$ implies $\theta \leqslant \frac{1}{2}$.
When $(\Delta x)^2 + (\Delta z)^2 \leqslant (\Delta y)^2$, one can take the
optimal element $e$ to satisfy
\begin{equation}
  \theta = \frac{\sqrt{(\Delta x)^2 + (\Delta z)^2}}{2 | \Delta y|},
  \qquad |e_2 | = \frac{(\Delta y)^2 - (\Delta x)^2 - (\Delta z)^2}{4
  (\Delta y)^2} .
\end{equation}
Taking into account $e_3 \Delta x = e_1 \Delta z$, one can for instance take
\begin{equation}
  e_1 = \frac{\Delta x}{2 | \Delta y|}, \quad e_2 = \frac{(\Delta
  y)^2 - (\Delta x)^2 - (\Delta z)^2}{4  (\Delta y)^2}, \quad e_3 =
  \frac{\Delta z}{2 | \Delta y|},
\end{equation}
and the optimal element
\begin{eqnarray}
  e_o & = & e_1 \sigma_1 + e_2 \sigma_2 + e_3 \sigma_3 \nonumber\\
  & = & \left( \begin{array}{cc}
    \frac{\Delta z}{2 | \Delta y|} & \frac{\Delta x}{2 |
    \Delta y|} - \mathrm{i} \frac{(\Delta y)^2 - (\Delta x)^2 - (\Delta
    z)^2}{4 (\Delta y)^2}\\
    \frac{\Delta x}{2 | \Delta y|} + \mathrm{i} \frac{(\Delta y)^2 -
    (\Delta x)^2 - (\Delta z)^2}{4 (\Delta y)^2} & - \frac{\Delta
    z}{2 | \Delta y|}
  \end{array} \right) \nonumber\\
  & = & \frac{1}{2 | \Delta y|} \left( \begin{array}{cc}
    \Delta z & \mathrm{i} \frac{(\Delta x - \mathrm{i}| \Delta y|)^2 + (\Delta z)^2}{2|
    \Delta y|}\\
    - \mathrm{i} \frac{(\Delta x + \mathrm{i}| \Delta y|)^2 + (\Delta z)^2}{2| \Delta
    y|} & - \Delta z
  \end{array} \right) . 
\end{eqnarray}
In this case we have
\begin{equation}
  d (\rho_1, \rho_2) = \frac{| \Delta \vec{r} |^2}{4 | \Delta y|} .
\end{equation}
When $(\Delta x)^2 + (\Delta z)^2 \geqslant (\Delta y)^2$, one can only take the
optimal element $e$ satisfying
\begin{equation}
  \theta = \frac{1}{2} \leqslant \frac{\sqrt{(\Delta x)^2 + (\Delta z)^2}}{2
  | \Delta y|}, \qquad |e_2 | = \frac{1}{4 } - \theta^2 = 0.
\end{equation}
Taking into account $w \Delta x = u \Delta z$, one can for instance take
\begin{equation}
  e_1 = \frac{1}{2}  \frac{\Delta x}{\sqrt{(\Delta x)^2 + (\Delta
  z)^2}}, \quad e_2 = 0, \quad e_3 = \frac{1}{2}  \frac{\Delta
  z}{\sqrt{(\Delta x)^2 + (\Delta z)^2}},
\end{equation}
and the optimal element
\begin{equation}
  e_o = e_1 \sigma_1 + e_2 \sigma_2 + e_3 \sigma_3 = \frac{1}{2 \sqrt{(\Delta x)^2 + (\Delta z)^2}} \left( \begin{array}{cc}
    \Delta z & \Delta x\\
    \Delta x & - \Delta z
  \end{array} \right) .
\end{equation}
In this case, we have
\begin{equation}
  d (\rho_1, \rho_2) = |e_2 | \cdot | \Delta y| + \theta \sqrt{(\Delta x)^2 +
  (\Delta z)^2} = \frac{1}{2}  \sqrt{(\Delta x)^2 + (\Delta z)^2} .
\end{equation}
One can also express the Bloch vectors in spherical coordinates. Denote
$\Delta \vec{r} = \vec{r}_1 - \vec{r}_2 = r (\sin \theta \cos \phi, \sin
\theta \sin \phi, \cos \theta)$, where $\theta$ is the polar angle with
respect to the positive $y$-axis, $0 \leqslant \theta \leqslant \pi$, $\phi$
is the azimuthal angle in the $xz$-plane measured from the $x$-axis, $0
\leqslant \phi < 2 \pi$, and $r$ is the distance between the endpoints of the
vectors $\vec{r}_1$ and $\vec{r}_2$, $0 \leqslant r \leqslant 2$.

\begin{proposition}
  When $N = 2$, in the fuzzy torus $(A, H, D_1)$, the Connes spectral distance
  between two states $\rho_1$ and $\rho_2$ is
  \begin{equation}
    d (\rho_1, \rho_2) = \left\{ \begin{array}{ll}
      \dfrac{1}{2}  \sqrt{(\Delta x)^2 + (\Delta z)^2} = \dfrac{1}{2
      } r \sin \theta, & \qquad \dfrac{\pi}{4} \leqslant \theta
      \leqslant \dfrac{3 \pi}{4} ;\\[1em]
      \dfrac{| \Delta \vec{r} |^2}{4 | \Delta y|} = \dfrac{r}{4
      | \cos \theta |}, & \qquad \text{others.}
    \end{array} \right.
  \end{equation}
\end{proposition}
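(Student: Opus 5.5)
We compute $d(\rho_1,\rho_2)$ as the support function of the compact convex set
\[
K=\Bigl\{\vec e\in\mathbb{R}^3:\ |e_2|+\sqrt{e_1^2+e_2^2+e_3^2}\leqslant \tfrac12\Bigr\}
\]
evaluated at $\Delta\vec r$. Adding a multiple of $\mathbb{I}_2$ changes neither $[D_1,e]$ nor $\mathrm{tr}(\Delta\rho\, e)$, and the supremum is attained at a Hermitian element. So it suffices to take $e=\vec e\cdot\vec\sigma$. By (\ref{dam}) and the eigenvalue computation above, $\|[D_1,e]\|_{op}=\|M\|_{op}=2(|e_2|+|\vec e|)$, so the ball condition is exactly $\vec e\in K$. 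Since $\mathrm{tr}(\Delta\rho\, e)=\Delta\vec r\cdot\vec e$, we get $d(\rho_1,\rho_2)=\sup_{\vec e\in K}\Delta\vec r\cdot\vec e$.

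\emph{Upper bound.} Put $a=|\Delta y|$ and $b=\sqrt{(\Delta x)^2+(\Delta z)^2}$. By the triangle inequality and Cauchy--Schwarz in the $(e_1,e_3)$-plane, $\Delta\vec r\cdot\vec e\leqslant a|e_2|+b\,\theta$. For fixed $\theta$ this bound increases with $|e_2|$. So we may push $|e_2|$ to the boundary of $K$: solving $\sqrt{e_2^2+\theta^2}=\frac12-|e_2|$ gives $|e_2|=\frac14-\theta^2$, with $\theta\in[0,\frac12]$. It then remains to maximize the concave quadratic
\[
f(\theta)=a\Bigl(\tfrac14-\theta^2\Bigr)+b\,\theta,\qquad 0\leqslant\theta\leqslant\tfrac12 .
\]
Its unconstrained maximizer is $\theta^*=b/(2a)$.
\begin{itemize}
\item If $b\leqslant a$, then $\theta^*\leqslant\frac12$ is admissible and $f(\theta^*)=\frac a4+\frac{b^2}{4a}=\frac{|\Delta\vec r|^2}{4|\Delta y|}$.
\item If $b\geqslant a$, then $f$ is increasing on $[0,\frac12]$, so the maximum is at $\theta=\frac12$, where $|e_2|=0$, giving $f(\frac12)=\frac b2$.
\end{itemize}
The degenerate case $\Delta y=0$ falls into the second regime and is handled directly: the bound is $b\theta\leqslant b/2$. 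The two formulas agree when $a=b$, which is a consistency check.

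\emph{Attainment.} We exhibit explicit elements of $K$ realizing the bound.
\begin{itemize}
\item For $b\leqslant a$, take $(e_1,e_3)=\theta^*(\Delta x,\Delta z)/b$ and $e_2=\mathrm{sgn}(\Delta y)\bigl(\frac14-\theta^{*2}\bigr)$. This is the element $e_o$ written above, with the sign of $e_2$ chosen to match $\Delta y$ so that both inequalities become equalities.
\item For $b\geqslant a$, take $(e_1,e_3)=(\Delta x,\Delta z)/(2b)$ and $e_2=0$.
\end{itemize}
In each case we check $|e_2|+|\vec e|=\frac12$ directly, so the element lies in $B$ and gives $\mathrm{tr}(\Delta\rho\,e)=f(\theta)$. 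This proves equality in both regimes.

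\emph{Translation to angles.} Finally, write $b=r\sin\theta$ and $a=r|\cos\theta|$, where $\theta$ is now the polar angle from the $y$-axis. The condition $b\geqslant a$ becomes $|\tan\theta|\geqslant1$, i.e. $\frac\pi4\leqslant\theta\leqslant\frac{3\pi}4$, where the distance is $\frac12 r\sin\theta$. In the complementary range, $|\Delta\vec r|^2/(4a)=r/(4|\cos\theta|)$.

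The step needing most care is the reduction to the one-variable problem. The two inequalities must be tightenable simultaneously: the Cauchy--Schwarz direction in the $(e_1,e_3)$-plane, the sign of $e_2$, and the saturation of $|e_2|+|\vec e|\leqslant\frac12$ are independent. This is what justifies taking the boundary $|e_2|=\frac14-\theta^2$ without loss of generality.
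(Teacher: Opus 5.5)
Your proof is correct and follows essentially the same route as the paper. Both derive the ball condition $|e_2|+|\vec e|\leqslant\frac12$ from the spectrum of $M^{\dagger}M$, bound $\Delta\vec r\cdot\vec e$ by $a|e_2|+b\theta$ via the triangle and Cauchy--Schwarz inequalities, saturate at $|e_2|=\frac14-\theta^2$, and maximize the resulting concave quadratic in $\theta$ with the case split $b\leqslant a$ versus $b\geqslant a$. Your version is tighter in two places the paper handles loosely: you derive $|e_2|\leqslant\frac14-\theta^2$ as a genuine bound rather than by an ``as large as possible'' argument, and you choose the sign of $e_2$ to match $\Delta y$ (the paper's $e_o$ always has $e_2\geqslant 0$, which is optimal only when $\Delta y>0$).
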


\begin{proof}
  The vector $\Delta \vec{r}$ satisfies $0 \leqslant | \Delta \vec{r} | = r
  \leqslant 2$, and we have
  \begin{equation}
    \sin \theta = \frac{\sqrt{(\Delta x)^2 + (\Delta z)^2}}{\sqrt{(\Delta x)^2
    + (\Delta y)^2 + (\Delta z)^2}} = \frac{\sqrt{(\Delta x)^2 + (\Delta
    z)^2}}{r}, \qquad \cos \theta = \frac{\Delta y}{r} .
  \end{equation}
  When $(\Delta x)^2 + (\Delta z)^2 \geqslant (\Delta y)^2$, we have
  $\dfrac{\sqrt{2}}{2} \leqslant \sin \theta \leqslant 1$, i.e.,
  $\dfrac{\pi}{4} \leqslant \theta \leqslant \dfrac{3 \pi}{4}$. Hence
  \begin{equation}
    d (\rho_1, \rho_2) = \frac{1}{2}  \sqrt{(\Delta x)^2 + (\Delta
    z)^2} = \frac{1}{2} r \sin \theta .
  \end{equation}
  When $(\Delta x)^2 + (\Delta z)^2 \leqslant (\Delta y)^2$, we have $0
  \leqslant \sin \theta \leqslant \dfrac{\sqrt{2}}{2}$, i.e., $0 \leqslant
  \theta \leqslant \dfrac{\pi}{4}$ or $\dfrac{3 \pi}{4} \leqslant \theta
  \leqslant \pi$. Hence
  \begin{equation}
    d (\rho_1, \rho_2) = \frac{| \Delta \vec{r} |^2}{4 | \Delta y|}
    = \dfrac{r}{4 | \cos \theta |} .
  \end{equation}
\end{proof}

It is easy to see that, in general, the above spectral distances are not invariant under unitary transformations of the states.

\section{Spectral Distance on the Twisted Fuzzy Torus}\label{sec5}

Since twisted fuzzy tori are important geometric objects in noncommutative
geometry, it is also significant to study Connes spectral distances on a
twisted fuzzy torus. Similar to the construction in Ref.~\cite{Dabrowski1}, to
obtain a simple but non-trivial conformally twisted model, one can consider the
projection with rank $r$,
\begin{equation}
  e_r = \mathrm{diag} (\underbrace{1, \ldots, 1}_r, \underbrace{0, \ldots,
  0}_{N - r}), \qquad 0 < r < N,
\end{equation}
and set
\begin{equation}
  k = \zeta [te_r + (1 - t) (1 - e_r)] \in A,
\end{equation}
where $\zeta > 0$, $0 \leqslant t \leqslant 1$.

For convenience, let us denote by $L_a, R_a$ the left and right multiplication
operators defined by $a \in A$, respectively. Using the above element $k$, one can define
the following Dirac operator with conformal rescaling
\begin{equation}
  k_J := JkJ^{- 1} = R_k, \qquad D_k := k_J Dk_J = R_k DR_k .
\end{equation}
Define
\begin{equation}
  \nu := L_{k^{- 1}} R_k = \mathrm{id}_{\mathbb{C}^4} \otimes
  \mathrm{Ad}_{k^{- 1}},
\end{equation}
where $\mathrm{Ad}_{k^{- 1}} (m) = k^{- 1} mk$, $m \in  \mathbb{M}_N (\mathbb{C})$. Then
$(A, H, D_k, J, \Gamma, \nu)$ forms an even $\nu$-twisted real spectral triple\cite{Brzezinski}.

\begin{lemma}\label{le7}
  For the operator norm, we have
  \begin{equation}
    \|A \otimes B\|_{op} = \|A\|_{op} \cdot \|B\|_{op} .
  \end{equation}
\end{lemma}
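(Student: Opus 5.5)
We prove $\|A\otimes B\|_{op}=\|A\|_{op}\,\|B\|_{op}$ for finite-dimensional matrices $A$, $B$, which is the only case used below, via the identity $\|T\|_{op}^2=\|T^{\dag}T\|_{op}$. We show the inequality $\leqslant$ directly and obtain $\geqslant$ from product vectors; as a cross-check, the singular value decomposition gives the result in one line.

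\emph{Upper bound.} Write $A\otimes B=(A\otimes\mathbb{I})(\mathbb{I}\otimes B)$. By submultiplicativity of the operator norm it suffices to show $\|A\otimes\mathbb{I}\|_{op}\leqslant\|A\|_{op}$ and $\|\mathbb{I}\otimes B\|_{op}\leqslant\|B\|_{op}$.
\begin{itemize}
\item For $\mathbb{I}\otimes B$: decompose an arbitrary vector as $\xi=\sum_i v_i\otimes\xi_i$ with $\{v_i\}$ orthonormal. Then
\begin{equation}
\|(\mathbb{I}\otimes B)\xi\|^2=\sum_i\|B\xi_i\|^2\leqslant\|B\|_{op}^2\sum_i\|\xi_i\|^2=\|B\|_{op}^2\|\xi\|^2 .
\end{equation}
\item For $A\otimes\mathbb{I}$: the same argument works after swapping the tensor factors, which is a unitary change of basis and so preserves the operator norm.
\end{itemize}

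\emph{Lower bound.} Pick unit vectors $u$, $w$ with $\|Au\|=\|A\|_{op}$ and $\|Bw\|=\|B\|_{op}$; they exist by compactness of the unit sphere in finite dimensions. Then $u\otimes w$ is a unit vector and
\begin{equation}
\|(A\otimes B)(u\otimes w)\|=\|Au\|\,\|Bw\|=\|A\|_{op}\|B\|_{op},
\end{equation}
because the Hilbert space norm on $\mathbb{C}^4\otimes\mathfrak{h}$ factorizes on product vectors. This norm is the product of the spinor inner product and the Hilbert--Schmidt inner product, as used in the proof of the first Proposition.

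\emph{Cross-check via SVD.} Write $A=U_1\Sigma_1V_1^{\dag}$ and $B=U_2\Sigma_2V_2^{\dag}$. Then
\begin{equation}
A\otimes B=(U_1\otimes U_2)(\Sigma_1\otimes\Sigma_2)(V_1\otimes V_2)^{\dag},
\end{equation}
with $U_1\otimes U_2$ and $V_1\otimes V_2$ unitary. The singular values of $A\otimes B$ are therefore the products $s_i(A)\,s_j(B)$, and the largest of these is $\|A\|_{op}\|B\|_{op}$.

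\emph{Main obstacle.} The only delicate step is the upper bound on $\|\mathbb{I}\otimes B\|_{op}$: one must check that the orthogonal decomposition $\xi=\sum_i v_i\otimes\xi_i$ is compatible with the inner product used on $H$. Both tensor factors here are finite-dimensional Hilbert spaces with the standard tensor-product inner product, so this holds and there is no real difficulty.
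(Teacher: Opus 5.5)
Your proof is correct, but your main argument takes a different route from the paper's; only your SVD cross-check coincides with it. The paper writes $(A\otimes B)^{\dag}(A\otimes B)=A^{\dag}A\otimes B^{\dag}B$ and uses the fact that the eigenvalues of a Kronecker product are the pairwise products of the eigenvalues of its factors. The top eigenvalue is therefore $\|A\|_{op}^2\|B\|_{op}^2$. This tacitly relies on both factors being positive semidefinite, so that the maximal product is the product of the maxima. That is exactly your singular-value statement in different language. Your primary argument is variational instead: for the upper bound you apply submultiplicativity to $A\otimes B=(A\otimes\mathbb{I})(\mathbb{I}\otimes B)$ together with the orthonormal-slice estimate, and for the lower bound you use a product test vector. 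This needs no spectral facts about Kronecker products, which the paper takes for granted. It also extends essentially verbatim to bounded operators on infinite-dimensional Hilbert spaces, once you replace the exact maximizers $u,w$ by approximate ones. The paper's spectral route is shorter and gives the whole singular value spectrum of $A\otimes B$, not just its largest value. A minor point: your argument only uses the standard tensor-product inner product on two arbitrary finite-dimensional Hilbert spaces, so you need not tie it to $\mathbb{C}^4\otimes\mathfrak{h}$. In the paper the lemma is in fact invoked for the abstract Kronecker product coming from Lemma~\ref{le8}.
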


\begin{proof}
  \begin{equation}
    (A \otimes B)^{\dag}  (A \otimes B) = A^{\dag} A \otimes B^{\dag} B,
  \end{equation}
  Since the eigenvalues of $A^{\dag} A \otimes B^{\dag} B$ are the products of
  the eigenvalues of $A^{\dag} A$ and $B^{\dag} B$, and the operator norm
  $\|A\|_{op}$ is the square root of the largest eigenvalue of $A^{\dag} A$,
  we have
  \begin{equation}
    \|A \otimes B\|_{op} = \|A\|_{op} \cdot \|B\|_{op} .
  \end{equation}
\end{proof}

\begin{lemma}\label{le8}
  \cite{Magnus} Consider the matrices $P, Q \in \mathbb{M}_n (\mathbb{C})$, and define
  the operator $P_Q : \mathbb{M}_n (\mathbb{C}) \rightarrow \mathbb{M}_n
  (\mathbb{C})$,
  \begin{equation}
    P_Q (R) := PRQ, \quad R \in \mathbb{M}_n (\mathbb{C}),
  \end{equation}
  then we have
  \begin{equation}
    P_Q \cong Q^T \otimes P.
  \end{equation}
\end{lemma}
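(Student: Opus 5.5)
The claim is the standard identification of the linear map $R\mapsto PRQ$ with a Kronecker product, via the column-stacking (vectorization) isomorphism $\mathrm{vec}:\mathbb{M}_n(\mathbb{C})\to\mathbb{C}^{n^2}$. The plan is to fix this isomorphism explicitly and show that it intertwines $P_Q$ with $Q^T\otimes P$. That is, we want
\begin{equation}
\mathrm{vec}(PRQ)=(Q^T\otimes P)\,\mathrm{vec}(R), \qquad R\in\mathbb{M}_n(\mathbb{C}).
\end{equation}
Here $\mathrm{vec}(R)$ stacks the columns $R_{\cdot 1},\ldots,R_{\cdot n}$ of $R$ on top of one another. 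Then $\cong$ means unitary equivalence with respect to the Hilbert--Schmidt inner product. In particular $\mathrm{vec}$ is unitary, since $\mathrm{Tr}(R^{\dag}R')=\mathrm{vec}(R)^{\dag}\mathrm{vec}(R')$. This also makes the operator norm statements used later (together with Lemma \ref{le7}) legitimate.

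The key computation goes column by column. The $k$-th column of $PRQ$ is
\begin{equation}
(PRQ)_{\cdot k}=P\,(RQ)_{\cdot k}=P\sum_{j=1}^n Q_{jk}R_{\cdot j}=\sum_{j=1}^n Q_{jk}\,P R_{\cdot j}.
\end{equation}
On the other hand, $Q^T\otimes P$ is the $n\times n$ block matrix whose $(k,j)$ block is $(Q^T)_{kj}P=Q_{jk}P$. Applying it to the stacked vector $\mathrm{vec}(R)$ therefore gives, in block $k$, exactly $\sum_j Q_{jk}PR_{\cdot j}$. The two sides agree block by block, and linearity in $R$ finishes the argument.

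Alternatively, one can verify the identity on the basis $E_{ab}=|a\rangle\langle b|$. We have $PE_{ab}Q=P|a\rangle\langle b|Q$, whose vectorization is $(Q^T|b\rangle)\otimes(P|a\rangle)$, using $\mathrm{vec}(|u\rangle\langle v|)=\bar{v}\otimes u$ in the convention $\mathrm{vec}(E_{ab})=|b\rangle\otimes|a\rangle$. This equals $(Q^T\otimes P)(|b\rangle\otimes|a\rangle)$.

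No real obstacle is expected; the only care needed is consistency of the vectorization convention. With row stacking one obtains $P\otimes Q^T$ instead, which is unitarily equivalent via the swap operator and so gives the same norms and spectra. I would state the column-stacking convention explicitly so that the formula $Q^T\otimes P$ holds literally, and remark on the transpose (not adjoint) of $Q$, since $\mathrm{vec}$ is complex linear.
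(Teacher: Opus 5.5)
Your proof is correct: the column-stacking identity $\mathrm{vec}(PRQ)=(Q^T\otimes P)\,\mathrm{vec}(R)$, checked block by block or on the basis $|a\rangle\langle b|$, together with the unitarity of $\mathrm{vec}$ for the Hilbert--Schmidt inner product, is the standard argument from the reference the paper cites. The paper itself gives no proof beyond the citation to \cite{Magnus}. Your remarks on conventions are accurate and are what the paper's later norm computation relies on: row stacking yields the swap-equivalent $P\otimes Q^T$, and the formula involves $Q^T$ rather than $Q^{\dag}$.
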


\begin{theorem}
  Assume $k \in A$ is a positive invertible element, and its inverse $k^{- 1}$
  is bounded. For any $a \in A$, there is
  \begin{equation}
    [D_k, L_a] = R_k  [D, L_a] R_k,
  \end{equation}
 and
  \begin{equation}
    \|[D_k, L_a]\|_{op} = \|k\|^2_{op} \cdot \|[D, L_a]\|_{op}.
  \end{equation}
The spectral distances of any states $\varphi, \psi$ satisfy the relation
  \begin{equation}
    d_{D_k}  (\varphi, \psi) = \|k\|_{op}^{- 2} d_D  (\varphi, \psi) .
  \end{equation}
\end{theorem}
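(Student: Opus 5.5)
The plan has three steps: an algebraic identity, a norm computation that rests on a tensor factorization, and a rescaling of the ball condition.

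\textbf{Step 1 (the identity).} By definition $k_J=JkJ^{-1}=R_k$, and by the order-zero condition already verified in the Proposition of Sec.~\ref{sec2}, $R_k$ commutes with every left multiplication $L_a$. Hence
\[
[D_k,L_a]=R_kDR_kL_a-L_aR_kDR_k=R_k(DL_a-L_aD)R_k=R_k[D,L_a]R_k .
\]

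\textbf{Step 2 (the norm identity).} This is the step I expect to be the main obstacle. Submultiplicativity only gives $\le\|k\|_{op}^2\|[D,L_a]\|_{op}$, and equality fails for a general operator sandwiched between $R_k$'s. The point to exploit is that $[D,L_a]$ and $R_k$ act on \emph{different} tensor factors.

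By the earlier Lemma, $[D,L_a]=\sum_\mu(\mathrm{i}\alpha\gamma^\mu+\beta\gamma_5\gamma^\mu)\otimes L_{[X_\mu,a]}$, which involves only left multiplications on $\mathfrak h=\mathbb M_N(\mathbb C)$. Using Lemma~\ref{le8}, I identify $\mathfrak h\cong\mathbb C^N\otimes\mathbb C^N$. Under this identification $L_b\cong\mathbb I_N\otimes b$ and $R_k\cong k^T\otimes\mathbb I_N$. This identification is unitary for the Hilbert--Schmidt product, since vectorization is an isometry. Consequently
\[
[D,L_a]\cong \mathbb I_N\otimes T_a,\qquad T_a=\sum_\mu(\mathrm{i}\alpha\gamma^\mu+\beta\gamma_5\gamma^\mu)\otimes[X_\mu,a],
\]
after reordering the factors, which is a unitary permutation. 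Similarly $R_k\cong k^T\otimes \mathbb I_{4N}$. It follows that
\[
R_k[D,L_a]R_k\cong (k^T)^2\otimes T_a .
\]
Lemma~\ref{le7} then gives
\[
\|[D_k,L_a]\|_{op}=\|(k^2)^T\|_{op}\,\|T_a\|_{op}
\quad\text{and}\quad
\|[D,L_a]\|_{op}=\|\mathbb I_N\|_{op}\,\|T_a\|_{op}=\|T_a\|_{op}.
\]
Since $k$ is positive, $\|(k^2)^T\|_{op}=\|k^2\|_{op}=\|k\|_{op}^2$, because the transpose has the same singular values. This yields the stated equality. The main care needed is in checking the index conventions of Lemma~\ref{le8}, namely $P_Q\cong Q^T\otimes P$, so that the left and right actions indeed land on distinct factors.

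\textbf{Step 3 (distances).} Step 2 shows that the ball for $D_k$ is $B_k=\{e:\|[D,e]\|_{op}\le\|k\|_{op}^{-2}\}=\|k\|_{op}^{-2}B$. Therefore
\[
d_{D_k}(\varphi,\psi)=\sup_{e\in B_k}|\varphi(e)-\psi(e)|=\|k\|_{op}^{-2}\sup_{e\in B}|\varphi(e)-\psi(e)|=\|k\|_{op}^{-2}d_D(\varphi,\psi).
\]
This is the same scaling argument as in (\ref{td}), equivalently (\ref{td}) applied with $t=\|k\|_{op}^2$, since the seminorm of $D_k$ equals that of $\|k\|_{op}^2D$.
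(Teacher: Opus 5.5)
Your proof is correct and follows essentially the same route as the paper. You commute $R_k$ past $L_a$ to get the sandwich identity, use the vectorization of Lemma~\ref{le8} to place the left and right actions on separate tensor factors, apply Lemma~\ref{le7}, and rescale the ball as in (\ref{td}). Your version is more careful than the paper's on two points: you state that the identification is unitary, and you make explicit the factor reordering that produces $(k^2)^T\otimes T_a$, whereas the paper's shorthand $(\mathbb{I}\otimes k^2)^T\otimes[D,L_a]$ tacitly relies on $[D,L_a]$ commuting with $R_k$ (the order-one condition).
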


\begin{proof}
  Since $L_a$ and $R_k$ commute, $L_a R_k = R_k L_a$, there is
  \begin{equation}\label{dka}
    [D_k, L_a] = R_k DR_k L_a - L_a R_k DR_k = R_k  (DL_a - L_a D) R_k = R_k 
    [D, L_a] R_k .
  \end{equation}
 Consider the action on some element $v \otimes m\in H$, there is
  \begin{eqnarray}
    R_k  [D, L_a] R_k  (v \otimes m) & = & [D, L_a]  (v \otimes (mk^2))
    \nonumber\\
    & = & [D, L_a]  (v \otimes m)  (\mathbb{I} \otimes k^2) . 
  \end{eqnarray}
  From Lemma \ref{le8}, one can obtain
  \begin{equation}
    R_k  [D, L_a] R_k \cong (\mathbb{I} \otimes k^2)^T \otimes [D, L_a],
  \end{equation}
  From the relation (\ref{dka}) and Lemma \ref{le7}, there are
  \begin{eqnarray}
    \|[D_k, L_a]\|_{op} & = & \|R_k [D, L_a] R_k \|_{op} \nonumber\\
    & = & \|(\mathbb{I} \otimes k^2)^T \|_{op} \cdot \|[D, L_a]\|_{op}
    \nonumber\\
    & = & \|k^2 \|_{op} \cdot \|[D, L_a]\|_{op} \nonumber\\
    & = & \|k\|^2_{op} \cdot \|[D, L_a]\|_{op} \nonumber\\
    & = & \|[\|k\|^2_{op} \cdot D, L_a]\|_{op} . 
  \end{eqnarray}
  Thus the ball condition of the Dirac operator $D_k$ is equivalent to that of
  $\|k\|^2_{op} \cdot D$. From the above relation (\ref{td}), the Connes spectral distance is,
  \begin{equation}
    d_{D_k}  (\varphi, \psi) = d_{\|k\|^2_{op} \cdot D}  (\varphi, \psi) =
    \|k\|_{op}^{- 2} \cdot d_D  (\varphi, \psi) .
  \end{equation}
  
\end{proof}

From the relations (\ref{dd12}) and (\ref{ij}), one can obtain the following result.
\begin{corollary}
  The spectral distance between the basis states $|i \rangle$ and $|j \rangle$
  in the $\nu$-twisted real spectral triple $(A, H, D_k, J, \Gamma, \nu)$ is
  \begin{equation}
    d_{D_k}  (|i \rangle, |j \rangle) = \frac{1}{ \lambda
      \|k\|_{o p}^2} \min \{|i -j |, N - |i - j|\}.
  \end{equation}
\end{corollary}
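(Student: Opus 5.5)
This corollary is obtained by chaining three results already in hand, so the plan is a composition of scalings rather than a new estimate.

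First, I invoke the preceding theorem on conformal rescaling. With $k$ positive, invertible and of the given form, it gives $d_{D_k}(\varphi,\psi)=\|k\|_{op}^{-2}\,d_D(\varphi,\psi)$ for all states $\varphi,\psi$. I apply this with $\varphi=\omega_i=|i\rangle\langle i|$ and $\psi=\omega_j$.

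Second, I replace $d_D$ by $d_{D_1}$ using (\ref{dd12}), i.e. $d_D=\lambda^{-1}d_{D_1}$ with $\lambda=\sqrt{\alpha^2+\beta^2}$. This relation came from the identity $\|[D,a]\|_{op}=\lambda\|[D_1,a]\|_{op}$ together with the scaling rule (\ref{td}). Third, I insert the basis-state formula (\ref{ij}), $d_{D_1}(|i\rangle,|j\rangle)=\min\{|i-j|,N-|i-j|\}$. Multiplying the three factors gives
\begin{equation*}
d_{D_k}(|i\rangle,|j\rangle)=\frac{1}{\lambda\|k\|_{op}^{2}}\min\{|i-j|,N-|i-j|\}.
\end{equation*}
The case $i>j$ follows by the symmetry of $d$.

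The only point needing care is that the conformal-rescaling theorem was stated for a general Dirac operator $D$. I must check that its proof uses nothing specific to $D_1$, and that the identity $\|R_k[D,L_a]R_k\|_{op}=\|k\|_{op}^2\|[D,L_a]\|_{op}$ holds for the full $D=\alpha D_1+\beta D_2$. It does, since the argument only uses that $R_k$ commutes with $L_a$ and factors as a tensor factor.

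Given that, the ball conditions for $D_k$ and for $\lambda\|k\|_{op}^2 D_1$ coincide. Applying (\ref{td}) once more, now with $t=\lambda\|k\|_{op}^2$, then yields the claim directly. This gives a second route to the same formula.

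For the explicit $k=\zeta[te_r+(1-t)(1-e_r)]$, one may also note $\|k\|_{op}=\zeta\max\{t,1-t\}$, which makes the prefactor concrete.
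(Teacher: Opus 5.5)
Your proposal is correct and follows the paper's route: the corollary is obtained by composing the conformal-rescaling theorem $d_{D_k}=\|k\|_{op}^{-2}d_D$ with $d_D=\lambda^{-1}d_{D_1}$ and the basis-state formula $d_{D_1}(|i\rangle,|j\rangle)=\min\{|i-j|,N-|i-j|\}$. One small correction to your check of the rescaling theorem: the norm identity $\|R_k[D,L_a]R_k\|_{op}=\|k\|_{op}^2\|[D,L_a]\|_{op}$ does not follow from $R_k$ commuting with $L_a$ alone. It needs $R_k$ to commute with $[D,L_a]=\sum_\mu(\mathrm{i}\alpha\gamma^\mu+\beta\gamma_5\gamma^\mu)\otimes L_{[X_\mu,a]}$, which holds by the order-one structure, so that $R_{k^2}$ acts on a separate tensor leg.
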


If $k = \zeta [te_r + (1 - t) (1 - e_r)]$ with $0 < t < 1$, then $\|k\|_{o p} =
\zeta \max \{t, 1 - t\}$, and thus
\begin{equation}
  d_{D_k}  (|i \rangle, |j \rangle) = \frac{1}{ \lambda
    \zeta^2 \max \{t, 1 - t\}^2} \min \{|i -j |, N - |i - j|\}.
\end{equation}

One can find that, in the $\nu$-twisted real spectral
triple $(A, H, D_k, J, \Gamma, \nu)$, the spectral distance depends on the conformal rescaling parameters $\zeta,t$, but not $r$.

\section{Discussions and conclusions}\label{sec6}

In this paper, we study the Connes spectral distance between states on the fuzzy torus. We construct a Dirac operator $D$ by commutators and anticommutators, and we find that the commutator part $D_1$ and
anticommutator part $D_2$ can be regarded as two mutually orthogonal components. Based on this Dirac operator, we construct a spectral triple of the fuzzy torus.

We study some properties of the spectral distance on the fuzzy torus. We find that the spectral triples $(A, H, D)$ and $(A, H, D_1)$, $(A, H, D_2)$ have similar metrics. There is a reciprocal Pythagorean theorem between the spectral distances of these spectral triples.

In order to study the Connes spectral distances between diagonal states, we construct a conditional expectation function of the optimal element which is just the corresponding diagonal part. This conditional expectation can lead to a contraction of the corresponding Lipschitz seminorm. Furthermore, we find that for any diagonal states, the corresponding optimal elements of spectral distances are also diagonal. In these diagonal cases, we only need to consider the ball condition in the diagonal entries of the optimal elements.

We explicitly calculate the spectral distances of some simple states, including basic states and some simple mixed states. We find that the spectral distances between basis states of
the fuzzy torus do not depend on the dimension $N$. There are some kinds of cyclic symmetry in both the optimal elements and the spectral distances between the diagonal states. But in general, these spectral distances are not invariant under unitary transformations of the states.
Furthermore, we also construct a fuzzy torus with some type of conformal twist, and study the relation between conformal parameters and spectral distances.

To our knowledge, these results have not been reported in literature.
These concrete examples are significant for studies of geometric structures of the fuzzy torus and other finite spectral triples.
They can help us to understand the mathematical relations of quantum states in the framework of noncommutative geometry. These methods and results can also be useful to
the studies of Connes spectral distances in other types of spectral triples.

\section*{Acknowledgments}
This work is partly supported by the
Guangdong Basic and Applied Basic Research Foundation (Grant No.
2024A1515010380), the 2024 Guangdong Province Education Science Planning
Project (Higher Education Special) (No. 2024GXJK455), the 2024 Guangdong
Higher Education Teaching Reform Project.

\section*{Appendix}
Let us prove Proposition \ref{pr2}.
Consider the spectral distance between a simple mixed state and the maximally
mixed state.
\begin{equation}
  \psi = \mathrm{diag} (p, 1 - p, 0, \ldots, 0), \quad \varphi = \frac{1}{N}
  \mathrm{diag} (1, 1, \ldots, 1), \quad 0 \leqslant p \leqslant 1.
\end{equation}
The spectral distance between $\psi$ and $\varphi$ is
\begin{eqnarray}
  d (\psi, \varphi) & = & \sup_{e \in B_o} \sum_{i = 1}^N (p_i - q_i) e_i
  \nonumber\\
  & = & \sup_{e \in B_o} \left[ \left( p - \frac{1}{N} \right) e_1 + \left( 1
  - p - \frac{1}{N} \right) e_2 - \frac{1}{N} e_3 - \cdots - \frac{1}{N} e_N
  \right] \nonumber\\
  & = & \sup_{e \in B_o} \left[ \left( p - \frac{1}{N} \right) \Delta_{12} +
  \frac{N - 2}{N} \Delta_{23} + \frac{N - 3}{N} \Delta_{34} + \cdots +
  \frac{1}{N} \Delta_{N - 1, N} \right],~~~~~~
\end{eqnarray}
where the real diagonal matrix $e = \mathrm{diag} (e_1, \ldots, e_N)$, and
$\Delta_{ij} = e_i - e_j$.

Obviously, there is
\begin{equation}
\frac{N - 2}{N} > \frac{N - 3}{N} > \cdots > \frac{1}{N} > 0.
\end{equation}
Using Theorem \ref{th3}, it is easy to see that, the above spectral distance function will achieve its
maximum when $\Delta_{i j} = \pm 1$ or $0$, and
\begin{equation}
\Delta_{12} + \Delta_{23} + \cdots + \Delta_{N - 1, N} = e_1 - e_N = 1 .
\end{equation}

When $N$ is even, $N = 2 m$, $m \geqslant 1$, if $p - \frac{1}{N} \leqslant
\frac{m - 1}{N}$, or $0 \leqslant p \leqslant \frac{1}{2}$, one can choose
\begin{eqnarray}
&&\Delta_{23} = \Delta_{34} = \cdots = \Delta_{m + 1, m + 2} =
   1, \nonumber\\
&& \Delta_{m + 2, m + 3} = \ldots = \Delta_{N - 1,
   N} = \Delta_{12} = - 1,
\end{eqnarray}
for example,
\begin{eqnarray}
&&  e_1 = m - 1,~~ e_2 = m,~~ e_3 =m - 1,~~ \ldots,~~ e_{m + 1} = 1,~~ e_{m + 2} = 0,
\nonumber\\
&&  e_{m + 3} = 1,~~ e_{m + 4} = 2,~~ \ldots,~~ e_{2
  m - 1} = m - 3,~~ e_N = m - 2,
\end{eqnarray}
and the spectral distance is
\begin{eqnarray}
  d (\psi, \varphi) & = &  \left[ \frac{N - 2}{N} +
  \frac{N - 3}{N} + \cdots + \frac{N - (m + 1)}{N} \right] \nonumber\\
  &&~~ - \left[ \frac{N -
  (m + 2)}{N} + \frac{N - (m + 3)}{N} + \cdots + \frac{1}{N} + \left( p -
  \frac{1}{N} \right) \right] \nonumber\\
  & = & \frac{N}{4} - p .
\end{eqnarray}
If $p - \frac{1}{N} \geqslant \frac{m - 1}{N}$, or $\frac{1}{2} \leqslant p
\leqslant 1$, then one can choose
\begin{equation}
\Delta_{12} = \Delta_{23} = \cdots = \Delta_{m, m + 1} =
   1, \quad \Delta_{m + 1, m + 2} = \ldots = \Delta_{N - 1,
   N} = -1,
\end{equation}
for example,
\begin{eqnarray}
&&  e_1 = m,~~ e_2 = m-1,~~ \ldots,~~ e_m =
  1,~~ e_{m + 1} = 0,
\nonumber\\
&&  e_{m + 2} = 1,~~ e_{m + 3} = 2,~~ \ldots,~~ e_{2
  m - 1} = m-2,~~ e_N = m-1,
\end{eqnarray}
and the spectral distance is
\begin{eqnarray}
  d (\psi, \varphi) & = & \left[ \left( p -
  \frac{1}{N} \right) + \frac{N - 2}{N} + \frac{N - 3}{N} + \cdots + \frac{N -
  m}{N} \right]\nonumber\\
    &&~~- \left[ \frac{N - (m + 1)}{N} + \frac{N - (m + 2)}{N} +
  \cdots + \frac{1}{N}  \right]\nonumber\\
  & = & \frac{N}{4} + p - 1 .
\end{eqnarray}

When $N$ is odd, $N = 2 m + 1$, if $p - \frac{1}{N} \leqslant \frac{N - (m +
2)}{N}$, or $0 \leqslant p \leqslant \frac{N - 1}{2 N}$, then
one can choose
\begin{eqnarray}
&&\Delta_{23} = \Delta_{34} = \cdots = \Delta_{m + 1, m + 2} =
   1, \quad \Delta_{m + 2, m + 3} = 0, \nonumber\\
&&   \Delta_{m + 3, m
   + 4} = \ldots = \Delta_{2 m, 2 m + 1} = \Delta_{12} = -1,
\end{eqnarray}
for example,
\begin{eqnarray}
&&  e_1 = m-1,~~ e_2 = m,~~ e_3 = m-1,~~ \ldots,~~ e_{m + 1} = 1,~~ e_{m + 2} = e_{m + 3}
  = 0,
\nonumber\\
&&
  e_{m + 4} = 1,~~ e_{m + 5} = 2, ~~\ldots,~~ e_{2
  m} = m-3,~~ e_{2 m + 1} = m-2,
\end{eqnarray}
and the spectral distance is
\begin{eqnarray}
  d (\psi, \varphi) & = &  \left[ \frac{N - 2}{N} +
  \frac{N - 3}{N} + \cdots + \frac{N - (m + 1)}{N} \right]\nonumber\\
    &&~~ - \left[ \frac{N -
  (m + 3)}{N} + \frac{N - (m + 4)}{N} + \cdots + \frac{1}{N} + \left( p -
  \frac{1}{N} \right) \right]\nonumber\\
  & = & \frac{N^2 - 1}{4 N} - p .
\end{eqnarray}
If $\frac{N - (m + 2)}{N} \leqslant p - \frac{1}{N} \leqslant \frac{N - (m +
1)}{N}$, or $\frac{N - 1}{2 N} \leqslant p \leqslant \frac{N + 1}{2 N}$, then
one can choose
\begin{eqnarray}
&&\Delta_{23} = \Delta_{34} = \cdots = \Delta_{m + 1, m + 2} =
   1, \quad \Delta_{12} = 0, \nonumber\\
&& \Delta_{m + 2, m + 3} =
   \Delta_{m + 3, m + 4} = \ldots = \Delta_{2 m, 2 m + 1} = -
   1,
\end{eqnarray}
for example,
\begin{eqnarray}
&&  e_1 = m,~~ e_2 = m,~~ e_3 = m-1, ~~\ldots,~~ e_{m + 1} = 1,~~ e_{m + 2} = 0,
\nonumber\\
&& 
  e_{m + 3} = 1,~~ e_{m + 4} = 2,~~ \ldots,~~ e_{2
  m} = m-2, ~~e_{2 m + 1} = m-1,
\end{eqnarray}
and the spectral distance is
\begin{eqnarray}
  d (\psi, \varphi) & = &  \left[  \frac{N - 2}{N} +
  \frac{N - 3}{N} + \cdots + \frac{N - (m + 1)}{N} \right]\nonumber\\
    &&~~ - \left[ \frac{N -
  (m + 2)}{N} + \frac{N - (m + 3)}{N} + \cdots + \frac{1}{N}  \right]\nonumber\\
  & = & \frac{(N - 1)^2}{4 N} .
\end{eqnarray}
If $\frac{N - (m + 1)}{N} \leqslant p - \frac{1}{N}$, or $\frac{N + 1}{2 N}
\leqslant p \leqslant 1$, then one can choose
\begin{eqnarray}
&&\Delta_{12} = \Delta_{23} = \Delta_{34} = \cdots \Delta_{m, m + 1} =
   1, \quad \Delta_{m + 1, m + 2} = 0, 
\nonumber\\
    &&\Delta_{m + 2, m
   + 3} = \Delta_{m + 3, m + 4} = \ldots = \Delta_{2 m, 2 m + 1} = -
   1,
\end{eqnarray}
for example,
\begin{eqnarray}
&&  e_1 = m,~~ e_2 = m-1,~~ \ldots,~~ e_m =
  1,~~ e_{m + 1} = e_{m + 2} = 0,
\nonumber\\
    &&
  e_{m + 3} = 1,~~ e_{m + 4} = 2,~~ \ldots, ~~e_{2
  m} = m - 2,~~ e_{2 m + 1} = m - 1,
\end{eqnarray}
and the spectral distance is
\begin{eqnarray}
  d (\psi, \varphi) & = & \left[ \left( p -
  \frac{1}{N} \right) + \frac{N - 2}{N} + \frac{N - 3}{N} + \cdots + \frac{N -
  m}{N} \right]\nonumber\\
    &&~~ - \left[ \frac{N - (m + 2)}{N} + \frac{N - (m + 3)}{N} +
  \cdots + \frac{1}{N}  \right]\nonumber\\
  & = & \frac{N^2 - 1}{4 N} - 1 + p .
\end{eqnarray}

\end{document}